\documentclass[12pt,reqno]{amsart}\sloppy
\usepackage{amssymb}
\usepackage[russian,english]{babel}
\usepackage[dvipsnames]{xcolor}
\usepackage{comment}
\usepackage{graphicx}
\usepackage{hyperref}
\usepackage[utf8]{inputenc}
\usepackage{mathabx}
\usepackage[vskip=8pt]{quoting}
\usepackage{setspace}
\usepackage{url}

\setstretch{1.2}
\parskip1ex

\newtheorem{theorem}{Theorem}
\newtheorem{claim}[theorem]{Claim}
\newtheorem{corollary}[theorem]{Corollary}
\newtheorem{lemma}[theorem]{Lemma}
\newtheorem{proposition}[theorem]{Proposition}

\theoremstyle{definition}
\newtheorem{definition}[theorem]{Definition}

\newenvironment{YG}{\noindent\color{Maroon}}{}

\newcommand\Ar{\smallskip\noindent\textbf{A: }}
\newcommand\Qn{\smallskip\noindent\textbf{Q: }}

\newcommand{\bra}[1]{\ensuremath{\langle#1|}}
\newcommand{\ket}[1]{\ensuremath{|#1\rangle}}
\newcommand\braket[2]{\ensuremath{\langle#1\,|\,#2\rangle}}
  
  \newcommand\Bigbraket[2]
    {\ensuremath{\Big\langle\,#1\:\Big|\:#2\,\Big\rangle}}

\newcommand\Co{\ensuremath{\mathbb C}}
\renewcommand\d{\ensuremath{\delta}}
\newcommand\dg{^\dag}
\newcommand\F{\ensuremath{\mathcal F}}

\renewcommand\H{\ensuremath{\mathcal H}}
\newcommand\h{\ensuremath{\hbar}}
\newcommand\iset[1]{\ensuremath{\left\langle#1\right\rangle}}
\renewcommand\l{\ensuremath{\lambda}}
\newcommand\ltr{\ensuremath{L^2(\R)}}
\newcommand\mo{^{-1}} 
\renewcommand\P{\ensuremath{\mathcal P}}
\newcommand\pa[1]{\ensuremath{\left(#1\right)}} 
\renewcommand{\phi}{\ensuremath{\varphi}}
\newcommand\qef{\hfill$\triangleleft$} 
\newcommand\R{\ensuremath{\mathbb R}}
\newcommand\set[1]{\ensuremath{\left\{#1\right\}}}
\newcommand\vp[1]{\ensuremath{\left|#1\right|}} 
\newcommand\x{\ensuremath{\times}}

\title[Wigner's quasidistribution]{Wigner's quasidistribution\\ and Dirac's kets}
\author[Blass]{Andreas Blass}
\address{Mathematics Department, University of Michigan, USA}
\email{ablass@umich.edu}
\author[Gurevich]{Yuri Gurevich}
\address{Computer Science and Engineering,
University of Michigan, USA}
\email{gurevich@umich.edu}
\author[Volberg]{Alexander Volberg}
\address{Mathematics Department,
Michigan State University, USA, and
Hausdorff Center for Mathematics, University of Bonn, Germany}
\email{volberg@msu.edu}

\begin{document}

\begin{abstract}
In every state of a quantum particle, Wigner's quasidistribution is the unique quasidistribution on the phase space with the correct marginal distributions for position, momentum, and all their linear combinations.
\end{abstract}

\maketitle
\thispagestyle{empty}

\footnotetext{Partially supported by the US Army Research Office under W911NF-20-1-0297 (authors 1 and 2) and the NSF grant DMS
1900286 (author 3).}

\bigskip
\mbox{}\hfill
\begin{minipage}{\textwidth}
\begin{quoting}\raggedleft\small\it
The only difference between a probabilistic classical
world and the equations of the quantum world is that somehow or other it appears as if the probabilities would have to go negative \dots
Okay, that's the fundamental problem.
I don't know the answer to it, \dots
if I try my best to make the equations look as near as possible to what would be imitable by a classical probabilistic computer, I get into trouble.\\[10pt]
--- Richard Feynman,\\
Simulating Physics with Computers, 1982 \cite[p.~480]{Feynman}
\end{quoting}
\end{minipage}
\bigskip\bigskip

\section{Introduction}
\label{sec:intro}

The story of negative probabilities starts with the 1932 article \cite{Wigner} by Eugene Wigner.
In quantum mechanics, probability distributions of the position and momentum of a particle make physical sense but their joint distribution doesn't. Yet Wigner exhibited such a joint distribution.
It had some desired properties.
However, some of its values were negative.
``But of course,'' wrote Wigner, ``this must not hinder the use of it in calculations.''

In 1987, Jacqueline and Pierre Bertrand proposed ``a new derivation of Wigner's function based on the property of positivity of its integrals along straight lines in phase space'' \cite{Bertrand}.
In 2014, in this Bulletin \cite{G224}, we sketched a mathematical proof of a characterization of Wigner’s quasidistribution as the unique quasidistribution on the phase space $\R^2$ that yields the correct marginal distributions not only for position and momentum but for all their linear combinations.
In 2021, that sketch was developed into a complete proof that the characterization is valid in ``nice'' states, namely the states given by smooth functions with compact support \cite{G245}.

In this paper, we prove that the characterization is valid in all states, with no exception.
Furthermore, the new proof is simpler, conceptually and technically.
In particular, the uniqueness is derived from a purely measure-theoretic observation that we prove in \S\ref{sec:pushforward}.
The simplicity of the new proof gave us the idea to present it in this Bulletin.
We made an effort to give our readers a comprehensible and maybe even enjoyable introduction to some foundational issues of science.

\bigskip\noindent\textbf{Quisani%
\footnote{A former student of the second author.}: }
What are negative probabilities?

\smallskip\noindent\textbf{Authors: }
The axiomatic definition of probabilities readily generalizes to \emph{quasiprobabilities}, or \emph{signed probabilities}, where negative values are allowed \cite{G245}.
Basically, you drop the requirement that probabilities take values in the real segment $[0,1]$ and allow arbitrary real values.

\Qn But what is the intuition behind negative probabilities?
An urn cannot have $-3$ red balls.

\Ar We don't know.
At this point, we find it more fruitful to think about what quasiprobabilities are good for.

\Qn You once compared the generalization of probabilities to quasiprobabilities with the generalization of real numbers to complex ones.
Complex numbers became indispensable, e.g., in solving algebraic equations.
Are there important theoretical problems that have been solved using negative probabilities?

\Ar We don't know such problems, but we expect that quasiprobabilities will be used to solve theoretical problems.
They are already used in practice.

\Qn Yes, you mentioned quantum tomography in our 2014 conversation \cite{G224}.
Being a software engineer, I realize the paramount value of practical applications.
But today I would like you to address basic questions.
Some of these basic questions you seemed to dodge during our 2014 conversation.
For example, you spoke about marginal distributions not only for the position and momentum but also for all their linear combinations.
But what are marginal distributions for linear combinations?
You never defined them properly.

\Ar Addressing basic questions is fine, and we will define those marginal distributions.
We will try to explain things the best we can.

\Qn Do explain. But please take into account that, in the meantime, I was busy with computer engineering.
I didn't have time to study quantum mechanics or measure theory.

\Ar Understood.

\section{Preliminaries}
\label{sec:prelim}

\subsection{Measures}
We recall some basic definitions of measure theory.

A \emph{measurable space} $M$ is a pair $(\Omega,\Sigma)$ where $\Omega$ is a nonempty set and $\Sigma$ a $\sigma$-algebra of subsets of $\Omega$.
In other words, $\Sigma$ is a Boolean algebra closed under countable unions.
Members of $\Sigma$ are \emph{measurable sets} of $M$.

\noindent{\tt Example:}
The real line \R\ with the collection of real Borel sets, which  is the least $\sigma$-algebra containing every open real interval $(a,b)$. \qef

\noindent{\tt Example:}
The real plane $\R^2$ with the collection of Borel subsets of $\R^2$, which is the least $\sigma$-algebra containing every open rectangle $(a,b) \times (c,d)$. \qef

A \emph{measure} $\mu$ on a nonempty set $\Omega$ is a function such that
\begin{enumerate}
\item the domain of $\mu$ is a $\sigma$-algebra of subsets of $\Omega$ known as \emph{$\mu$-measurable} sets,
\item $\mu$ assigns a real number or $\infty$ to each $\mu$-measurable set, and
\item $\mu$ is \emph{countably additive} which means that, for all pairwise disjoint measurable sets $s_n$, we have
\[ \mu\Big(\bigcup_{n=1}^\infty s_n\Big) = \sum_{n=1}^\infty \mu(s_n). \]
Since the union $\bigcup s_n$ is independent of the order of the sets $s_n$, so is the sum $\sum \mu(s_n)$. That implies absolute convergence by a well-known theorem of Riemann.
\end{enumerate}
If $\mu$ does not take value $\infty$, then $\mu$ is \emph{finite}.
If $\mu$ has no negative values then it is \emph{nonnegative}.
Every measure we consider in this paper is either finite or nonnegative.

\noindent{\tt Example:}
The Lebesgue measure on Euclidean spaces $\R^k$ and finite-dimensional Hilbert spaces $\Co^k$, called \emph{length} in the case of \R, called \emph{area} in the cases of $\R^2$ and \Co, and called \emph{volume} in the case of $\R^3$ and in general.
A careful treatment of the Lebesgue measure, with all the necessary proofs, is somewhat involved \cite[Chapter~11]{Rudin}, but the definition itself is simple, and we give a version of it on the example of the interval $(0,1)$ in \R.

An open set $O$ in $(0,1)$ is the disjoint union of its maximal intervals; define the length of $O$ to be the sum of the lengths of its maximal intervals.
For any set $s\subseteq (0,1)$, the \emph{outer measure} $\lambda^*(s)$ of $s$ is the infimum of the lengths of the open sets $O$ that cover $s$.
The \emph{inner measure} $\lambda_*(s)$ is defined as $1 - \lambda^*((0,1) - s)$.

If $\lambda^*(s) = \lambda_*(s)$, then  $s$ is \emph{Lebesgue measurable} and the outer (and also the inner) measure $\lambda^*(s)$ is called the \emph{Lebesgue measure} $\lambda(s)$.

The Lebesgue measure on \R\ is defined by applying this construction to intervals $(i,i+1)$ for all integers $i$ and adding the resulting measures if all the pieces are measurable.

\begin{lemma}\label{lem:diff}
If $\mu, \nu$ are finite measures on a measurable space $M$, then their difference
$ (\mu-\nu)(s) = \mu(s) - \nu(s) $
is a finite measure on $M$.
\end{lemma}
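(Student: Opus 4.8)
The plan is to verify, one at a time, the three defining properties of a measure listed above, together with finiteness. Write $M = (\Omega,\Sigma)$. Since $\mu$ and $\nu$ are both measures on $M$, they share the common domain $\Sigma$, so $\mu-\nu$ is defined on $\Sigma$, which is a $\sigma$-algebra; this settles property~(1). For property~(2) and finiteness at once, I would observe that finiteness of $\mu$ and $\nu$ means $\mu(s)$ and $\nu(s)$ are real numbers (never $\infty$) for every $s\in\Sigma$, so their difference $\mu(s)-\nu(s)$ is again a real number. In particular $\mu-\nu$ never takes the value $\infty$, which is precisely what finiteness demands, and the indeterminate form $\infty-\infty$ never arises; this is exactly the role of the hypothesis that both measures are finite.

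The substance of the proof is countable additivity, property~(3). Given pairwise disjoint sets $s_1,s_2,\dots\in\Sigma$ with union $s=\bigcup_{n=1}^\infty s_n$, I would start from the definition and apply the countable additivity of $\mu$ and of $\nu$ separately:
\[ (\mu-\nu)(s) = \mu(s)-\nu(s) = \sum_{n=1}^\infty \mu(s_n) - \sum_{n=1}^\infty \nu(s_n). \]
Both of these series converge (indeed absolutely, as noted in the definition) because the measures are finite, so it remains only to combine them into the single series $\sum_{n=1}^\infty\bigl(\mu(s_n)-\nu(s_n)\bigr) = \sum_{n=1}^\infty(\mu-\nu)(s_n)$.

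The only point requiring care, and the closest thing to an obstacle here, is this last recombination: the identity $\sum a_n - \sum b_n = \sum (a_n - b_n)$ is valid precisely because both $\sum a_n$ and $\sum b_n$ are convergent series of real numbers, which is again guaranteed by finiteness. I would invoke the elementary fact that the termwise difference of two convergent series converges to the difference of their limits, which finishes the verification of countable additivity and hence the proof.
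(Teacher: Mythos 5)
Your proposal is correct and follows essentially the same route as the paper: the paper's proof also reduces to the chain $(\mu-\nu)\bigcup_n s_n = \sum_n\mu(s_n)-\sum_n\nu(s_n)=\sum_n(\mu(s_n)-\nu(s_n))$, justifying the recombination by convergence of the two series (the paper cites absolute convergence, you cite the termwise-difference rule for convergent real series --- the same point). Your explicit checks of the domain and of finiteness are fine additions but do not change the argument.
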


\begin{proof}
If measurable sets $s_1, s_2, \dots$ are pairwise disjoint, then
\begin{align*}
(\mu-\nu)\bigcup_n s_n &= \mu\bigcup_n s_n - \nu\bigcup_n s_n= \sum_n \mu(s_n) - \sum_n \nu(s_n)\\
&= \sum_n (\mu(s_n) - \nu(s_n))
= \sum_n (\mu-\nu)(s_n).
\end{align*}
The third equality holds because of absolute convergence.
\end{proof}

A function $f: \Omega_1\to\Omega_2$ from a measurable space $M_1 = (\Omega_1,\Sigma_1)$ to a measurable space $M_2 = (\Omega_2,\Sigma_2)$ is \emph{measurable for $M_1, M_2$} if the $f$-preimage of every measurable set in $M_2$ is measurable in $M_1$.
If $M_2$ is the real line \R\ or complex line \Co\ endowed with the $\sigma$-algebra of Borel sets, then $f$ is a \emph{measurable function on $M_1$}.
\qef

\subsection{\ltr, and test functions}

If $f$ is a measurable function on a Euclidean space $\R^k$ and $\mu$ is a nonnegative measure on $\R^k$, then
\[ \int f d\mu = \int_{\R^k} f d\mu = \int_{\R^k} f(x) d\mu(x) \]
means the Lebesgue integral of $f$ with respect to measure $\mu$.
For real-valued $f$, the integral is defined by approximating $f$ with so-called \emph{simple functions}, i.e.\ functions $g(x)$ taking only finitely many values $v_i$, each on a measurable set $s_i$.
The integral $\int g(x) d\mu(x)$ is simply $\sum_i v_i\mu(s_i)$.
If the supremum of the integrals of simple functions $g(x)\le f(x)$
for all $x$ coincides with the infimum of the integrals of simple
functions $g(x)\ge f(x)$ for all $x$, then their common value is the
integral $\int f d\mu$, in which case $f$ is \emph{integrable} with
respect to measure $\mu$.
See details in Chapter~11 of \cite{Rudin}.

For \Co-valued functions $f$, just integrate the real and imaginary parts separately.
It is easy to check that every bounded continuous function is integrable with respect to any finite measure $\mu$.

If $s$ is a measurable subset of $\R^k$, we let $\chi_s(x)$ be 1 for $x\in s$ and 0 otherwise.
Then $\int_s f(x)\, d\mu(x)$ means $\int_{\R^k} \chi_s(x)f(x)\, d\mu(x)$.

\noindent{\tt Proviso.}
By default, Euclidean spaces $\R^k$ come with the Lebesgue measure. \qef

\ltr\ is the Hilbert space of square integrable functions $\psi: \R \to\Co$ with the inner product \braket\psi\phi\
given by  the (Lebesgue) integral
$\displaystyle \int_\R \psi^*(x)\phi(x)\, dx$.

Two \ltr\ functions are considered equivalent if they differ only on a set of measure zero.
Strictly speaking, $L^2(\R)$ vectors are the equivalence classes.
It is more convenient though to work with individual functions modulo the equivalence relation.

The forward Fourier transform \F\ sends an \ltr\ function $\psi(x)$ to
\[
 \widehat{\psi}(\xi)  =
    \frac1{\sqrt{2\pi}}\int \psi(x)\,e^{-i\xi x}\,dx
\]
provided that the integral exists.
Similarly, the inverse Fourier transform $\F\mo$ sends a function $\phi(\xi)$ to
\[
 \widecheck{\phi}(x) =
    \frac1{\sqrt{2\pi}} \int \phi(\xi)e^{i\xi x}\,d\xi,\\
\]
Mathematically $x$ and $\xi$ are real variables. In applications, the dimension of $\xi$ is the inverse of that of $x$ so that $\xi x$ is a pure number.
Here and in the rest of the paper, integrals are by default integrals over \R.

The forward and inverse Fourier transforms are defined also for functions of several variables. In particular, provided the integrals exist, we have
\begin{align*}
 \widehat f(\xi,\eta)&= \frac1{2\pi}
    \iint f(x,y)\, e^{-i(\xi x + \eta y)}\,dx\,dy,\\
 \widecheck g(x,y)&= \frac1{2\pi} \iint g(\xi,\eta) e^{i(\xi x + \eta y)}\, d\xi\,d\eta.
\end{align*}

If $\mu$ is a finite measure on \R, its Fourier transform $\widehat \mu$ is an $\R\to\Co$ function:
\[
\widehat\mu(\zeta) = \frac1{\sqrt{2\pi}}\int_\R e^{-ix\zeta}d\mu(\zeta).
\]
Similarly, if $\nu$ is a finite measure on $\R^2$, its Fourier transform $\widehat \nu$ is an $\R^2\to\Co$ function:
\[
\widehat\nu(\xi,\eta)=
 \frac1{2\pi}\int_{\R^2}e^{-i(\xi x+\eta y)}\,d\nu(x,y).
\]

An \ltr\ function $\psi(x)$ is a \emph{Schwartz function} if it is infinitely differentiable and if it and its derivatives rapidly approach zero when $x\to\pm\infty$ in the sense that, for all nonnegative integers $j,k$, we have
\[ \lim_{x\to\pm\infty} \vp{x^j \frac{d^k\psi(x)}{dx^k}} = 0. \]
Schwartz functions are also known as \emph{test functions}.

The Fourier transform of a test function is a test function.
By the Plancherel theorem \cite[Theorem~A.19]{Hall}, the Fourier transform \F\ is a unitary operator on the test functions.
But these functions are dense in \ltr.
By continuity, \F\ is (or rather extends to) a unitary operator on the whole \ltr.

\subsection{Distributions, and exponential operators}
\label{sub:delta}

Dirac introduced a ``function'' $\delta(x)$ which is identically zero for all $x\ne0$ while $\delta(0)$ is infinite, so infinite that $\int_\R \delta(x)dx = 1$.
This makes no sense. $\delta$ is not a function in the usual sense. But it does make sense in the context of integrals of the form  $\int_\R f(x)\delta(x)dx$ which should be $f(0)$, at least for well behaved functions $f$.
Laurent Schwartz suggested viewing Dirac's $\delta$ and similar ``generalized functions'' as linear functionals on the space of test functions.
Thus, Dirac's $\delta$-function would be thought of as the linear functional on test functions $f$:
\begin{equation}\label{d}
 f\mapsto \int f(x) \delta(x) dx = f(0).
\end{equation}
If $0\ne c\in\R$, then
\begin{equation}\label{da}
\d(x) = \frac 1{|c|}\, \d\Big(\,\frac xc\,\Big).
\end{equation}
Indeed, if we use the substitution $y = x/c$ and keep integrating from $-\infty$ to $\infty$, then we have
\[
 \int f(x) \frac 1{|c|}\, \d\Big(\,\frac xc\Big) dx
= \int  f(cy) \delta(y) dy = f(0).
\]
Schwartz developed these ideas into a theory of \emph{distributions}, i.e.\ continuous linear real-valued functionals on the space of test functions (with the suitable topology).
Since then distributions play a major role in the theory of differential equations.

Some divergent integrals, e.g. $\int e^{itx}dt$, can be seen as distributions in that sense. In fact, as distributions,
\begin{equation}\label{e-d}
  \int e^{itx}dt = 2\pi\delta(x).
\end{equation}

Indeed,
\begin{align*}
\int dx\,f(x)\,\int e^{itx}dt
  &= \sqrt{2\pi} \int dt\,
     \frac1{\sqrt{2\pi}} \int f(x) e^{itx} dx\\
  &= \sqrt{2\pi} \int \widecheck f(t)\,dt\\
  &= 2\pi \cdot \frac1{\sqrt{2\pi}}
  \int \widecheck f(t)e^{-it0}dt = 2\pi f(0).
\end{align*}

The exponential $e^A$ of an operator $A$ on \ltr\ is the operator
\begin{equation}\label{exponent}
e^A = \sum_{k=0}^\infty \frac{A^k}{k!}
= I + A + \frac12 A^2 + \frac16 A^3 + \dots
\end{equation}
provided that series converges.

If $(X\psi)(x)= x\cdot\psi(x)$, then
\[
(e^X\psi)(x)
= \sum_{k=0}^\infty \frac1{k!} (X^k\psi)(x)
= \psi \cdot \sum_{k=0}^\infty \frac1{k!} x^k
= \psi \cdot e^x.
\]
If $D$ is the derivative operator $\frac d{dx}$ and $c$ a real number, then $e^{cD}\psi(x) = \psi(x+c)$.
Indeed,
\begin{align*}
 e^{cD} \psi(x)
&=\sum_{k=0}^\infty \frac{(cD)^k \psi(x)}{k!}
 = \sum_{k=0}^\infty \frac{D^kf(x)}{k!} c^k \\
&= \psi(x) + \frac{\psi'(x)}{1!}c
        + \frac{\psi''(x)}{2!}c^2
        + \frac{\psi'''(x)}{3!}c^3 + \dots
\end{align*}
which is the Taylor series of $\psi(x+c)$ around point $x$; think of $c$ as $\Delta x$.

\Qn I worry about convergence of the Taylor series.

\Ar The Taylor series certainly converges on analytic functions, in particular on Gaussian functions
\[
  \exp\left(-\frac{(x-a)^2}{2b^2}\right).
\]
The linear combinations of Gaussian functions are dense in $L^2(\R)$, and there is a unique continuous extension of $e^{cD}$ to $L^2$, namely the shift $f(x)\mapsto f(x+c)$.

\section{Pushforward measures, and uniqueness theorem}
\label{sec:pushforward}

We recall the definition of pushforward measures and then prove a measure-theoretic uniqueness theorem used in the proof of our main theorem in \S\ref{sec:char}.

Consider measurable spaces $M_1 = (\Omega_1,\Sigma_1)$ and $M_2 =
(\Omega_2,\Sigma_2)$. Let $\mu$ be a measure on $M_1$ and let a
function $f: \Sigma_1 \to \Sigma_2$ be measurable for $M_1, M_2$.

\begin{definition}[\S3.6 in \cite{Bogachev}]
The \emph{pushforward of $\mu$ along $f$}, a.k.a. the
\emph{$f$-pushforward of $\mu$} or the \emph{$f$-image} of $\mu$, is the measure
\[ f_*\mu(e) = \mu\big( f^{-1}(e) \big) \]
on $M_2$. \qef
\end{definition}

It is easy to check that $\nu = f_*\mu$ is indeed a measure on $M_2$.

\begin{proposition}\label{prp:image}
With notation as above, for every measurable function $g:\Omega_2\to\Co$ on $M_2$, if  $g(f(x))$ is $\mu$-integrable then $g$ is $\nu$-integrable and
\[ \int_{M_2} g(y) d\nu(y) = \int_{M_1} g(f(x)) d\mu(x). \]
\end{proposition}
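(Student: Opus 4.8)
The plan is to run the standard bootstrapping argument of measure theory, building the identity up from indicator functions to simple functions, then to nonnegative measurable functions, and finally to arbitrary integrable functions. Throughout I write $\nu=f_*\mu$, so that by definition $\nu(e)=\mu(f^{-1}(e))$ for every measurable $e$ in $M_2$, and I write $g\circ f$ for the composite $x\mapsto g(f(x))$, which is measurable on $M_1$ because $f$ is measurable.

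First, the base case. For an indicator $g=\chi_e$ with $e$ measurable in $M_2$, the composite is exactly $g\circ f=\chi_{f^{-1}(e)}$, since $\chi_e(f(x))=1$ iff $f(x)\in e$ iff $x\in f^{-1}(e)$. The definition of the integral of a simple function together with the definition of the pushforward then gives $\int_{M_2}\chi_e\,d\nu=\nu(e)=\mu(f^{-1}(e))=\int_{M_1}\chi_{f^{-1}(e)}\,d\mu$. By linearity of the integral on simple functions the identity extends immediately to every nonnegative simple function $s=\sum_i v_i\chi_{e_i}$, using that $s\circ f=\sum_i v_i\chi_{f^{-1}(e_i)}$ is again simple on $M_1$.

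Next I would pass to an arbitrary nonnegative measurable $g$ on $M_2$. Choosing simple functions with $0\le s_n\uparrow g$ pointwise, the composites satisfy $0\le s_n\circ f\uparrow g\circ f$ pointwise on $M_1$. Applying the simple-function case to each $s_n$ and letting $n\to\infty$, the monotone convergence theorem (standard Lebesgue theory, as in Chapter~11 of \cite{Rudin}) turns the equalities $\int_{M_2}s_n\,d\nu=\int_{M_1}s_n\circ f\,d\mu$ into $\int_{M_2}g\,d\nu=\int_{M_1}g\circ f\,d\mu$, valid in $[0,\infty]$. Finally I would remove the sign and reality restrictions. Applying the nonnegative case to $|g|$ yields $\int_{M_2}|g|\,d\nu=\int_{M_1}|g\circ f|\,d\mu$, whose right-hand side is finite precisely because $g\circ f$ is assumed $\mu$-integrable; hence $g$ is $\nu$-integrable. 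Writing $g=(\Re g)^+-(\Re g)^-+i(\Im g)^+-i(\Im g)^-$ as a combination of four nonnegative measurable functions, applying the nonnegative case to each, and recombining by linearity produces the stated identity for $g$.

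The one genuinely delicate point is hidden in the nonnegative step: a simple function lying below $g\circ f$ on $M_1$ need not be of the form $s\circ f$ for any simple $s$ on $M_2$, so one cannot simply match the defining suprema of simple functions across $f$ and obtain both inequalities directly. This is exactly why I route the argument through a monotone approximation of $g$ upstairs on $M_2$ and invoke monotone convergence, rather than comparing lower integrals on the two spaces term by term; the measurability of $f$ is what guarantees that each approximant $s_n\circ f$ is a legitimate simple function on $M_1$, which is all that the limiting step requires.
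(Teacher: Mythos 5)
Your argument is correct, and it is essentially the standard change-of-variables proof for pushforward measures: indicators, then simple functions by linearity, then nonnegative measurable functions by monotone convergence, then general integrable functions by splitting into the four nonnegative pieces of the real and imaginary parts. The paper, however, does not prove this proposition at all: it simply cites Theorem~3.6.1 of Bogachev together with the remarks following its proof, and observes that the passage from real-valued to \Co-valued $g$ is routine. So your write-up is not an alternative route so much as a self-contained proof of something the paper treats as a black box; what it buys is independence from the reference, and your closing remark correctly identifies the one place where care is needed (a simple function below $g\circ f$ on $M_1$ need not factor through $f$, which is why one must approximate $g$ on $M_2$ and push the approximants forward rather than compare the two suprema directly).

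One caveat worth flagging: your monotone-convergence step requires $\mu$ to be nonnegative, which matches the paper's own definition of the integral, but the paper later applies Proposition~\ref{prp:image} to the signed measure $\mu=\mu_1-\mu_2$ in the proof of the Uniqueness Theorem. Your argument covers that case too once you add one line: write the finite signed measure as $\mu=\mu^+-\mu^-$ (Jordan decomposition), note that $(f_*\mu)^{\pm}$-type bookkeeping reduces to $f_*\mu=f_*\mu^+-f_*\mu^-$, apply your result to each nonnegative piece, and subtract. Without that remark the proposition as you have proved it is slightly narrower than the use the paper makes of it.
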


For real-valued $f$, Proposition~\ref{prp:image} is a modification of Theorem~3.6.1 in book \cite{Bogachev} as described in the comments following the proof of theorem in the book. The generalization to \Co-valued functions is straightforward.

Recall that, by default, real Euclidean spaces $\R^k$ are equipped with the
Lebesgue measure.
Accordingly, measurable subsets of $\R^k$ are Lebesgue measurable, and
integrals are Lebesgue integrals.

\begin{theorem}[Uniqueness]\label{thm:unique}
  Let $\mu_1, \mu_2$ be finite Borel measures on $\R^2$.
If $(ax+by)_*\mu_1 = (ax+by)_*\mu_2$  for all $a,b$ not both zero,
then $\mu_1 = \mu_2$.
\end{theorem}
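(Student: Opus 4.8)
\section*{Proof proposal}

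The plan is to pass to Fourier transforms and to exploit the fact that a finite Borel measure on $\R^2$ is determined by its Fourier transform. First I would form the difference $\mu = \mu_1 - \mu_2$, which by Lemma~\ref{lem:diff} is again a finite measure on $\R^2$ (possibly taking negative values); since $(ax+by)_*\mu = (ax+by)_*\mu_1 - (ax+by)_*\mu_2$ directly from the definition of pushforward, the hypothesis says exactly that $(ax+by)_*\mu$ is the zero measure for every $(a,b)\neq(0,0)$. The goal then becomes to show that such a $\mu$ must vanish.

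The crucial observation is that the one-dimensional Fourier transform of a linear pushforward is a slice of the two-dimensional Fourier transform of $\mu$. Applying the change-of-variables formula (Proposition~\ref{prp:image}) to the measurable function $(x,y)\mapsto ax+by$ and to the bounded continuous function $g(s)=e^{-its}$, I obtain, for every $t\in\R$,
\[ \widehat{(ax+by)_*\mu}(t) = \frac1{\sqrt{2\pi}}\int_\R e^{-its}\, d\big((ax+by)_*\mu\big)(s) = \frac1{\sqrt{2\pi}}\int_{\R^2} e^{-it(ax+by)}\, d\mu(x,y) = \sqrt{2\pi}\,\widehat\mu(ta,tb). \]
As $t$ ranges over $\R$ and $(a,b)$ over all nonzero pairs, the point $(ta,tb)$ sweeps out every point of $\R^2$: for $(\xi,\eta)\neq(0,0)$ take $t=1$, $(a,b)=(\xi,\eta)$, and the origin is reached at $t=0$. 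Hence the hypothesis is equivalent to $\widehat\mu\equiv 0$ on all of $\R^2$.

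It remains to deduce $\mu = 0$ from $\widehat\mu\equiv 0$. Here I would test $\mu$ against an arbitrary Schwartz function $g$ on $\R^2$. Writing $g$ as the inverse Fourier transform of $\widehat g$ and interchanging the order of integration,
\[ \int_{\R^2} g\, d\mu = \frac1{2\pi}\iint \widehat g(\xi,\eta)\Big(\int_{\R^2} e^{i(\xi x+\eta y)}\, d\mu(x,y)\Big)\, d\xi\, d\eta = \iint \widehat g(\xi,\eta)\,\widehat\mu(-\xi,-\eta)\, d\xi\, d\eta = 0, \]
where the Fubini interchange is legitimate because $\widehat g$ is a test function, hence absolutely integrable, while $\mu$ is finite. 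Thus $\int g\, d\mu = 0$ for every test function $g$.

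The main obstacle is the final inference, that a finite Borel measure annihilating all test functions must be the zero measure; the Fourier computation above is essentially bookkeeping, whereas this step is where the genuine analytic care lies. I would handle it by approximation and regularity: test functions are dense among continuous functions, and the indicator of any bounded open rectangle can be approximated by test functions sitting under a fixed integrable envelope, so dominated convergence forces $\mu$ to vanish on every rectangle. Since the rectangles generate the Borel $\sigma$-algebra of $\R^2$, uniqueness of the extension of a measure from a generating algebra then yields $\mu = 0$, and therefore $\mu_1 = \mu_2$. (Equivalently, one may simply cite the standard uniqueness theorem for Fourier transforms of finite measures.)
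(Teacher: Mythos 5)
Your proposal is correct and follows essentially the same route as the paper: form the difference measure $\mu=\mu_1-\mu_2$ via Lemma~\ref{lem:diff}, use Proposition~\ref{prp:image} to identify the Fourier transform of each pushforward $(ax+by)_*\mu$ with the restriction of $\widehat\mu$ to the line through $(a,b)$, conclude $\widehat\mu\equiv 0$, and invoke injectivity of the Fourier transform on finite Borel measures. The only difference is that you sketch a proof of that last injectivity step (testing against Schwartz functions, Fubini, and approximation of rectangle indicators), whereas the paper simply cites it from Bogachev; your sketch is sound, so this is an elaboration rather than a divergence.
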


\begin{proof}
Let $a,b$ range over pairs of reals not both zero.
By Lemma~\ref{lem:diff}, $\mu = \mu_1 - \mu_2$ is a measure on $\R^2$. It suffices to prove that $\mu$ is the zero measure $\R^2$.
Let $\nu_{ab} =(ax+by)_*\mu$.

Every $\nu_{ab}$ is the zero measure on \R.
Indeed, if $s$ is a Borel subset $s$ of \R, let $S = \set{(x,y): ax+by\in s}$. We have
\begin{multline*}
(ax+by)_*\mu(s) = \mu(S) = \mu_1(S) - \mu_2(S) \\ =
(ax+by)_*\mu_1(s) - (ax+by)_*\mu_2(s) = 0.
\end{multline*}

We have.
\begin{align*}
  \widehat\nu_{ab}(\zeta)
  &=\int_{\R}e^{i\zeta t}\,d\nu_{ab}(t)
   =\int_{\R^2}e^{i\zeta(ax+by)}\,d\mu(x,y)\\
  &=\int_{\R^2}e^{i(a\zeta x+b\zeta y)}\,d\mu(x,y),
\end{align*}
where the second equality uses Proposition~\ref{prp:image} with $g(t) = e^{i\zeta t}$; the integrand $e^{i\zeta(ax+by)}$ is a bounded continuous function and therefore is integrable with respect to the (finite) measure $\mu$.

Comparing this with the Fourier transform of $\mu$,
\[
\widehat\mu(\xi,\eta)=\int_{\R^2}e^{i(\xi x+\eta y)}\,d\mu(x,y),
\]
we get
\[
\widehat\mu(a\zeta,b\zeta) = \widehat\nu_{ab}(\zeta).
\]
Since every $\nu_{ab}$ is the zero measure, every $\widehat\nu_{ab}(\zeta)=0$ for all $\zeta$. It follows that $\widehat\mu$ is the function zero.

By Proposition~3.8.6 in book \cite{Bogachev}, if two Borel measures on $\R^2$ have equal Fourier transforms, then they coincide. Applying this to $\mu$ and the zero measure, we conclude that $\mu$ is the zero measure.
\end{proof}

The theorem generalizes to higher dimensions, but we restrict our attention to $\R^2$.

\section{Marginal distributions}

Traditionally, for a (signed) probability distribution of several variables, the marginal (signed) distributions are defined only for single variables or subsets of the variables.
%
We extend this definition to linear functions of the variables, restricting attention to just two variables.

Let \P\ be a quasiprobability distribution on real plane $\R^2$ with coordinate axes $x$ and $y$, and let $(a,b)$ range over pairs of real numbers not both zero.

\begin{definition}
The \emph{$(ax+by)$ marginal} of \P\ is the pushforward measure (and in fact quasidistribution) $(ax+by)_*\P$ of \P\ along the function $z = ax+bp: \R^2\to\R$. \qef
\end{definition}

In particular,
$(x)_*\P(s) = \P\big( s\times\R \big)$, and
$(y)_*\P(s) = \P\big( \R\times s \big)$,
so that $(x)_*\P$ and $(y)_*\P$ are traditional marginals.

Now suppose that \P\ is given by a density function $f(x,y)$, so that $\P(s) = \iint_s f(x,y)\, dx\, dy$
for all measurable subsets $s$ of $\R^2$.
We show that in this case every marginal $(ax+by)_*\P$ is given by a density function which will be denoted $(ax+by)_*f$.

\begin{lemma}\label{lem:den}
For every pair $(a,b)$, the function
\[g(z) =
\begin{cases}
\displaystyle
\frac1{|b|} \int f\pa{x, \frac1b(z-ax)} dx
 &\mbox{if $b\ne0$}\\[10pt]
\displaystyle
\frac1{|a|} \int f\pa{\frac za, y} dy
 &\mbox{otherwise}
\end{cases}\]
is the density function $(ax+by)_*f$.
\end{lemma}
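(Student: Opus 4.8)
The plan is to verify directly that $g$ is a density for the marginal, i.e.\ that $(ax+by)_*\P(s) = \int_s g(z)\,dz$ for every Borel set $s\subseteq\R$; by the meaning of ``density function'' this is exactly what must be shown. The only tools needed are the linear change of variables formula for Lebesgue integrals on $\R^2$ and Fubini's theorem. First I would unwind the definitions: by the definition of the pushforward, together with the definition of $\int_S$ and the assumption that $\P$ has density $f$,
\[
(ax+by)_*\P(s) = \P\pa{\set{(x,y):ax+by\in s}}
 = \iint_{\R^2}\chi_s(ax+by)\,f(x,y)\,dx\,dy,
\]
where $\chi_s$ is the indicator of $s$. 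This reduces the lemma to a computation of this double integral.

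Next, for the case $b\ne0$, I would apply the linear (hence smooth and invertible) change of variables $(x,y)\mapsto(x,z)$ with $z=ax+by$, so that $y=\frac1b(z-ax)$. Its Jacobian is $1/b$, whence $dx\,dy=\frac1{|b|}\,dx\,dz$, and the integral becomes
\[
\iint_{\R^2}\chi_s(z)\,f\pa{x,\tfrac1b(z-ax)}\,\tfrac1{|b|}\,dx\,dz.
\]
Integrating in $x$ first (Fubini) and keeping the constant factor $\frac1{|b|}$, the inner $x$-integral is exactly $g(z)$ as defined in the first case of the lemma, so the whole expression equals $\int_s g(z)\,dz$, as required. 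The case $b=0$ (so $a\ne0$) is the mirror image: substitute $x=z/a$ while keeping $y$, whose Jacobian is $1/a$, and integrate in $y$ first to obtain the second formula for $g$.

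The step I expect to be the main obstacle is the justification of Fubini's theorem, since $f$ is signed (indeed \Co-valued) rather than nonnegative, so Tonelli's theorem alone does not suffice. The clean remedy is to note that, because $\P$ is a finite measure with density $f$, we have $f\in L^1(\R^2)$, i.e.\ $\iint|f|\,dx\,dy<\infty$; the change of variables multiplies the integrand only by the positive constant $\frac1{|b|}$, so the transformed integrand is absolutely integrable and Fubini applies. This same integrability simultaneously guarantees that the inner integral defining $g(z)$ converges for almost every $z$ and that $g\in L^1(\R)$, so that $g$ is a genuine density. I would also remark that positivity of $f$ is nowhere used; only the finiteness of $\P$ is needed.
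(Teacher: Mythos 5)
Your proposal is correct and follows essentially the same route as the paper's proof: the linear change of variables $(x,y)\mapsto(x,z)$ with $z=ax+by$ and Jacobian factor $\frac1{|b|}$, followed by Fubini. The only differences are cosmetic --- you verify the identity on arbitrary Borel sets via indicator functions where the paper restricts to intervals $[u,v]$, and you add an explicit integrability justification for Fubini that the paper leaves implicit.
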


\begin{proof}
We consider the case $b\ne0$; the other case is similar (and a bit simpler).
It suffices to prove that $(ax+by)_*\P[u,v] = \int_u^v g(z)\,dz$ on intervals $[u,v]$ with $u\le v$. We have
\[ (ax+by)_*\P[u,v] = \iint_{u\le ax+by\le v} f(x,y)\,dx\,dy. \]
Let $z = ax+by$, so that $y = \frac1b(z - ax)$.
Change variables in the integral, from $x,y$ to $x,z$. The absolute value of the Jacobian determinant of this transformation is $\frac1{|b|}$, so we obtain
\begin{align*}
(ax+by)_*\P[u,v]
&= \iint_{u\le z\le v} \frac1{|b|} f\pa{x,\frac1b(z-ax)}\,dx\,dz\\
&= \int_u^v dz \int_\R \frac1{|b|} f\pa{x,\frac1b(z-ax)}\,dx
 = \int_u^v g(z)\,dz. \qedhere
\end{align*}
\end{proof}

\begin{lemma}\label{lem:constfactor1}
$(ax+by)_*\P(u,v) = (acx+bcy)_*\P(cu,cv)$ for every real $c\ne0$ and every open interval $(u,v)$ of \R.
\end{lemma}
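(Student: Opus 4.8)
The plan is to avoid densities altogether and instead unwind both sides directly from the definition of the pushforward, reducing the identity to the equality of two subsets of $\R^2$ to which \P\ is applied. Writing $\ell(x,y) = ax+by$, so that $acx+bcy = c\,\ell(x,y)$, the definition $f_*\mu(e)=\mu\pa{f\mo(e)}$ gives
\[
(ax+by)_*\P(u,v) = \P\set{(x,y) : \ell(x,y)\in(u,v)}
\]
and
\[
(acx+bcy)_*\P(cu,cv) = \P\set{(x,y) : c\,\ell(x,y)\in(cu,cv)}.
\]
Hence it suffices to show that these two subsets of $\R^2$ coincide, i.e.\ that $\ell(x,y)\in(u,v)$ if and only if $c\,\ell(x,y)\in(cu,cv)$.

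The key observation is that multiplication by a nonzero constant is a bijection $m_c:\R\to\R$, $m_c(t)=ct$, that carries the open interval with endpoints $u$ and $v$ exactly onto the open interval with endpoints $cu$ and $cv$. Consequently, for a fixed point $(x,y)$ with $z=\ell(x,y)$, we have $z\in(u,v)$ precisely when $cz\in(cu,cv)$, which is exactly the claimed equivalence. Equivalently, and more conceptually, since $c\,\ell = m_c\circ\ell$, functoriality of the pushforward (immediate from the definition, because preimages compose) yields $(acx+bcy)_*\P = (m_c)_*\pa{(ax+by)_*\P}$, and evaluating the right-hand side on $(cu,cv)$ returns the value of $(ax+by)_*\P$ on $m_c\mo(cu,cv)=(u,v)$.

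The only point requiring care --- and the sole ``obstacle'' in what is otherwise an immediate consequence of the definitions --- is the sign of $c$. When $c>0$ the map $m_c$ is increasing and $(cu,cv)$ is already the interval from $cu$ to $cv$; when $c<0$ it is decreasing, so $cu>cv$, and the symbol $(cu,cv)$ must be read as the open interval between its endpoints, i.e.\ $\set{t:cv<t<cu}$. In either case $m_c$ restricts to a bijection from $(u,v)$ onto $(cu,cv)$, so no boundary points are gained or lost and the equality of the two subsets of $\R^2$ holds verbatim. Applying \P\ to these equal sets finishes the proof; note that nothing here uses that \P\ has a density, so the lemma holds for an arbitrary quasidistribution.
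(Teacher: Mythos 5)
Your proof is correct and follows essentially the same route as the paper: unwind both pushforwards via the definition and observe that $u<ax+by<v$ holds exactly when $c(ax+by)$ lies in the open interval with endpoints $cu$ and $cv$. Your explicit handling of the case $c<0$ (where the displayed inequalities reverse and $(cu,cv)$ must be read as the interval between its endpoints) is a point the paper's three-line computation glosses over, so this is a welcome refinement rather than a divergence.
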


\begin{proof}\mbox{}
\begin{align*}
(ax+by)_*\P(u,v)
&= \P\set{(x,y): u < ax + by < v}\\
&= \P\set{(x,y): cu < acx + bcy < cv} \\
&= (acx+cby)_*\P(cu,cv). \qedhere
\end{align*}
\end{proof}

\begin{lemma}[Lemma~6.4 in \cite{G245}]\label{lem:j2m}
For all real $a,b$ not both zero and every function $g:\R\to\R$, the following claims are equivalent.
\begin{enumerate}
\item $g$ is the density function $(ax+bp)_*f$.
\item $\displaystyle \widehat g(\zeta) =
  \sqrt{2\pi}\cdot\widehat f(a\zeta,b\zeta)$ where $\widehat f$ and $\widehat g$ are (forward) Fourier transforms of $f$ and $g$ respectively.
\end{enumerate}
\end{lemma}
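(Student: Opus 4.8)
The plan is to prove (1)$\Rightarrow$(2) by a direct computation of the Fourier transform of the explicit density furnished by Lemma~\ref{lem:den}, and then to deduce (2)$\Rightarrow$(1) from injectivity of the Fourier transform.

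First I would treat the case $b\ne0$. Starting from the density
\[ g(z) = \frac1{|b|}\int f\pa{x,\tfrac1b(z-ax)}\,dx \]
supplied by Lemma~\ref{lem:den}, I would write out $\widehat g(\zeta) = \frac1{\sqrt{2\pi}}\int g(z)e^{-i\zeta z}\,dz$, interchange the order of integration (justified by Fubini, since $f$ is integrable over $\R^2$), and then change variables from $z$ to $y = \frac1b(z-ax)$ with $x$ held fixed. This substitution has $dz = b\,dy$; accounting for the reversal of orientation when $b<0$ produces exactly the factor $|b|$ that cancels the $\frac1{|b|}$ in front. The result is
\[ \widehat g(\zeta) = \frac1{\sqrt{2\pi}}\iint f(x,y)\,e^{-i(a\zeta x + b\zeta y)}\,dx\,dy. \]
Comparing this against the two-variable transform $\widehat f(a\zeta,b\zeta) = \frac1{2\pi}\iint f(x,y)e^{-i(a\zeta x + b\zeta y)}\,dx\,dy$ yields $\widehat g(\zeta) = \sqrt{2\pi}\cdot\widehat f(a\zeta,b\zeta)$, which is precisely condition (2). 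The degenerate case $b=0$ (so $a\ne0$) is handled identically using the second branch of Lemma~\ref{lem:den}, with the roles of $x$ and $y$ interchanged.

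For the converse, I would invoke the fact that a function is determined by its Fourier transform. The computation just completed shows that the genuine density $(ax+by)_*f$ satisfies condition (2). If an arbitrary $g$ also satisfies (2), then $\widehat g$ coincides with the Fourier transform of $(ax+by)_*f$; by injectivity of \F\ the two functions agree (modulo a set of measure zero), which is exactly (1).

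The main obstacle I anticipate is bookkeeping rather than conceptual: one must justify the Fubini interchange and the change of variables and, in particular, track the sign of $b$ so that the Jacobian factor $|b|$ appears correctly and cancels against the prefactor. A secondary point requiring care is fixing the function space in which the injectivity step for (2)$\Rightarrow$(1) is carried out, so that ``equal Fourier transforms'' legitimately forces ``equal functions.''
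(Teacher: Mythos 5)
Your proof is correct and follows essentially the same route the paper intends: the paper proves this lemma only by pointing back to the computation in Theorem~\ref{thm:unique}, where Proposition~\ref{prp:image} with $g(t)=e^{i\zeta t}$ gives the identity $\widehat{\nu}_{ab}(\zeta)=\widehat\mu(a\zeta,b\zeta)$ at the level of measures, while you obtain the same identity concretely by substituting the explicit density of Lemma~\ref{lem:den} and changing variables (your Fubini-plus-Jacobian step is just the hands-on version of Proposition~\ref{prp:image}). Your explicit handling of the $\sqrt{2\pi}$ normalization and the appeal to injectivity of \F\ for the converse direction are both appropriate and are in fact more careful than what the paper writes down.
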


The computation that  proves the lemma was essentially done in our
  proof of the uniqueness theorem above.

\section{Position, momentum, and their linear combinations}
\label{sec:wigner}

Consider one particle moving in one dimension.
A generalization to more particles in more dimensions is relatively straightforward.

In classical mechanics, the position $x$ and momentum $p$ of the particle determine its current state. The set of all possible classical states is the phase space of the particle.
In quantum mechanics, the state space of the particle is the Hilbert space $L^2(\R)$.

Using Dirac's bra-ket notation, we write \ket\psi\ for the vector given by function $\psi$.
Unit vectors \ket\psi\ represent states of the particle, and two unit vectors represent the same state if and only if they differ by a scalar factor $e^{i\theta}$.

\Qn How come a whole $\R \to\Co$ function is needed to represent just one quantum state?

\Ar Because, in quantum mechanics, a particle is also a wave. If it is in state \ket\psi, then $|\psi(x)|^2$ is the probability density at $x$ for finding the particle.
\braket\psi\psi\ is the total probability.
Accordingly, \braket\psi\psi\ must be 1, and this is why unit vectors are used to represent states.

In quantum mechanics, observable quantities are represented by Hermitian operators on the state space.
In particular, the position observable $X$ and momentum observable $P$ are (represented by) operators
\begin{align*}
 (X\psi)(x)& = x\cdot\psi(x), \\
 (P\psi)(x)& = -i\h \frac{d\psi}{dx}
\end{align*}
where $\h = h/2\pi$ is the reduced Planck constant; $h$ is the (unreduced) Planck constant.
We will simplify notation by assuming (by proper choice of
units) that $\h = 1$.

\Qn Functions $X\psi$ and $P\psi$ may fail to be square integrable.

\Ar Indeed, the operators $X,P$ are undefined in some states.

\Qn The formula for $P$ looks mysterious to me. Is momentum also related to the wave character of our particle?

\Ar Yes, it is. The momentum $p$ corresponds to the wavelength
$\lambda = h/p$ (de Broglie relation).
So, if a particle had an exact value $p$ of the momentum,
  its wave function would be (up to a scalar factor)
\[ \psi(x)=e^{2\pi i x/\lambda}=e^{ipx/\h} =  e^{ipx},\]
which is an eigenfunction of $P$ with eigenvalue $p$.
(Yes, this $\psi$ isn't in our Hilbert space. We'll return to this point in \S\ref{sec:dirac}.)
$P$ is designed to be the operator whose eigenvalues are momenta, just as $X$ is the operator whose eigenvalues (corresponding to ``eigenfunctions'' $\delta(x-q)$) are positions $q$.

There is a simple mathematical connection between the two operators: $\F P \F\mo = X$ where \F\ is the Fourier transform.
It suffices to verify this equality on test functions.
\begin{align*}
\F P(\psi)(\xi)
 &= \int P\psi(x)e^{-i\xi x} dx
 = \int -i \frac{d\psi(x)}{dx} e^{-i\xi x} dx\\
 &= i \int \psi(x) \frac{de^{-i\xi x}}{dx} dx
 = \xi \int \psi(x) e^{-i\xi x} dx = \\
 &= \xi\cdot\F(\psi)(\xi) =  X\F(\psi)(\xi).
\end{align*}
The third equality uses integration by parts;
the extra terms disappear because $e^{-i\xi x}$ is bounded and the test function $\psi$ approaches zero when the argument goes to $\pm\infty$.

\Qn The equality $\F P \F\mo = X$ makes me worry about the dimensions.
You mentioned earlier that, when $x$ is a length, as here, then the variable $\xi$ of the Fourier transform is a reciprocal length.
But here the variable of the Fourier transform seems to be a momentum.
How do you reconcile these dimensions?

\Ar
By convention, we're using units where $\h=1$, and the   dimension of $\h$ is momentum times length, so our convention   makes reciprocal length the same as momentum.
Thus, our $P$ is dimensionally correct.

\section{Dirac's kets}
\label{sec:dirac}

The spectral theory of self-adjoint operators in finite dimensional Hilbert spaces is relatively simple.
Suppose that \H\ is $n$-dimensional, and consider a self-adjoint operator $A$ on \H.
Since \H\ is self-adjoint, all its eigenvalues are real.
There exists an orthonormal basis \ket1, \dots, \ket{n} for \H\ composed of eigenvectors of $A$.
Let  $\l_1, \dots, \l_n$ be the corresponding eigenvalues.

For simplicity of exposition, we assume that all eigenvalues $\lambda_k$ are distinct (and thus non-degenerate since the number of them equals the dimension of \H). This suffices for our purposes in this paper, and it allows us to label the eigenvectors with the corresponding eigenvalues.
We may write \ket{\lambda_k} instead of \ket k.

It will be convenient, in the infinite dimensional case, to use an alternative characterization of the ``distinct eigenvalues'' assumption.
The operator $A$ is called \emph{cyclic} if there is a vector \ket\psi\ in \H\ such that the vectors
\[
\ket\psi,\, A\ket\psi,\, A^2\ket\psi,\, \dots,\, A^{n-1}\ket\psi
\]
span the whole space \H; such vector \ket\psi\ is \emph{$A$-cyclic}.

\begin{claim}
$A$ is cyclic if and only if the eigenvalues $\l_1, \dots, \l_n$ are all distinct.
\end{claim}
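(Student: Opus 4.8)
The plan is to translate cyclicity into linear algebra in the orthonormal eigenbasis $\ket1,\dots,\ket n$. Writing an arbitrary vector as $\ket\psi=\sum_k c_k\ket k$, we have $A^j\ket\psi=\sum_k c_k\l_k^{\,j}\ket k$, so the coordinates of $\ket\psi,A\ket\psi,\dots,A^{n-1}\ket\psi$ relative to the eigenbasis are the columns of the $n\times n$ matrix whose $(k,j)$ entry is $c_k\l_k^{\,j}$ (rows indexed by $k$, columns by $j=0,\dots,n-1$). These $n$ vectors span \H\ if and only if that matrix is invertible, and by factoring $c_k$ out of row $k$ its determinant is the scaled Vandermonde
\[
 \Big(\textstyle\prod_k c_k\Big)\prod_{i<k}(\l_k-\l_i).
\]

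For the implication that distinct eigenvalues give cyclicity, I would simply exhibit a cyclic vector: take $\ket\psi=\sum_k\ket k$, so that every $c_k=1$ and the determinant reduces to the pure Vandermonde $\prod_{i<k}(\l_k-\l_i)$. This is nonzero exactly because the $\l_k$ are pairwise distinct, so the powers $A^j\ket\psi$ span \H\ and $\ket\psi$ is $A$-cyclic.

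For the converse I would prove the contrapositive: if $\l_i=\l_j=\l$ for some $i\ne j$, then no vector is cyclic. The conceptual point is that every polynomial $p(A)$ acts on the eigenspace $\H_\l$ of $\l$ as the scalar $p(\l)$, and here $\dim\H_\l\ge2$. Hence, for any $\ket\psi$, the $\H_\l$-component of $p(A)\ket\psi$ equals $p(\l)$ times the single fixed vector given by the $\H_\l$-component of $\ket\psi$; consequently the entire span of $\set{A^j\ket\psi:j\ge0}$ meets $\H_\l$ in at most a one-dimensional subspace and cannot fill the at-least-two-dimensional $\H_\l$. So the powers of $A$ never span \H, whatever $\ket\psi$ we choose.

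I expect the converse to be the only real obstacle. The determinant computation by itself shows only that one particular choice of $\ket\psi$ fails when eigenvalues coincide (the scaled Vandermonde vanishes), but cyclicity requires ruling out every $\ket\psi$ at once. The eigenspace/scalar-action argument does this uniformly and is the heart of the matter; it also makes transparent why a degenerate eigenspace is exactly the obstruction, matching the intuition that a cyclic operator has a multiplicity-free spectrum.
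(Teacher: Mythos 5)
Your proof is correct and follows essentially the same route as the paper: the same cyclic vector $\sum_k\ket k$ for the forward direction (the paper leaves the Vandermonde verification implicit), and for the converse the same observation that a repeated eigenvalue forces the $\ket i,\ket j$ components of every $p(A)\ket\psi$ to stay in a fixed ratio, i.e.\ in a one-dimensional slice of the degenerate eigenspace. No gaps.
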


\begin{proof}
If the values are distinct, then $\sum_{k=1}^n \ket k$ is $A$-cyclic.
If, on the other hand, $\lambda_i = \lambda_j$, then for any vector $\ket\psi = \sum c_k \ket k$, all linear combinations of vectors $A^l \ket\psi$ have coefficients of \ket i and \ket j in the same ratio $c_i : c_j$. Thus these linear combinations fail to span \H.
\end{proof}

Think of \H\ as the state space of a quantum system where unit vectors represent states of the system (and two unit vectors represent the same state if and only if they are collinear).
Then every state \ket\psi\ can be written as
\begin{equation}\label{expansion}
\ket\psi = \sum_j \ket{\lambda_j}\,\braket{\lambda_j}\psi
\end{equation}
where scalars \braket{\lambda_j}{\psi} are probability amplitudes of the wave function \ket\psi, so that the corresponding probabilities are $\vp{\braket{\lambda_j}{\psi}}^2$.
Accordingly,
\[
A\ket\psi = \sum_j A\ket{\lambda_j}\,\braket{\lambda_j}\psi
             = \sum_j \l_j\ket{\lambda_j}\,\braket{\lambda_j}\psi
\]
and the expectation of $A$ in a state \ket\psi\ is
\[
\bra\psi A\ket\psi = \sum_j \l_j\braket{\psi}{\lambda_j}\,\braket{\lambda_j}\psi
                   = \sum_j \l_j \vp{\braket{\lambda_j}{\psi}}^2.
\]

The infinite dimensional case is much more involved.
Let $A$ be a self-adjoint operator on a infinite-dimensional space \H.
If you measure  $A$, you still receive some real number but it is not necessarily an eigenvalue. It is just an element of the spectrum
\[ \sigma(A) = \set{\lambda: \text{operator $A-\lambda I$ is not invertible}}
\]
of $A$.
Notice that the set of eigenvalues is
 \[
\set{\lambda: \text{operator $A-\lambda I$ is not one-to-one}}.
\]
In finite dimensions, any one-to-one linear operator is invertible,  but this principle fails in infinite dimensions.

As in the finite dimensional case, we make the simplifying assumption that the spectrum is simple in the sense that $A$ is \emph{cyclic}.
That is, there is a vector \ket\psi, called an \emph{$A$-cyclic} vector, such that the vectors
\[
\ket\psi,\, A\ket\psi,\, A^2\ket\psi,\, \dots
\]
are dense in whole Hilbert space \H.

How to generalize the spectral theory of self-adjoint operators in the finite dimensional case to the infinite dimensional case?
Paul Dirac came up with an elegant heuristic generalization \cite{Dirac} which works for operators like $aX+bP$ on \ltr.
We sketch Dirac's generalization, restricting attention to the Hilbert space \ltr\ and to cyclic self-adjoint operators $A$ on \ltr\ with $\sigma(A) = \R$.

For each spectrum value $r\in \R$, there is a generalized eigenvector, in short \emph{eigenket}, \ket{r} for $A$, so that $A\ket{r} = r\ket{r}$.
As in the finite dimensional case, we take advantage of the cyclicity assumption to label eigenkets \ket{r} by the corresponding spectrum value $r$.

\Qn What do you mean by eigenvector being generalized?

\Ar That it does not necessarily belong to \ltr.

\Qn That is confusing. Give me an example.

\Ar If $A$ is the momentum operator $P$,
then \ket{r} is the function $x\mapsto e^{irx}/\sqrt{2\pi}$. Indeed,
\[
P\pa{\frac{e^{irx}}{\sqrt{2\pi}}} = -i\,\frac d{dx}e^{irx} = r\pa{\frac{e^{irx}}{\sqrt{2\pi}}}.
\]

The finite-dimensional orthonormality requirement is replaced by \emph{\d-normality}:
\[ \braket{s}{r} = \delta(r-s). \]
In the case of $P$, using \eqref{e-d} we have
\[
\int \frac{e^{-isx}}{\sqrt{2\pi}}\cdot\frac{e^{irx}}{\sqrt{2\pi}}\,dx
= \frac1{2\pi}\int e^{i(r-s)x} dx = \d(r-s).
\]

The finite-dimensional expansion \eqref{expansion} with respect to the eigenvectors becomes the \emph{Dirac basis expansion}:
\[
\ket\psi = \int \ket{r}\,\braket{r}{\psi}\, dr
\]
where \braket{r}{\psi} is the density of probability amplitude, so that the corresponding probability density is
\begin{equation}\label{density}
D_A(r) = \vp{\braket{r}{\psi}}^2.
\end{equation}
The probability distribution on \R, given by the probability density
function $D_A(r)$, will be denoted $\P_A$.

We have
$  A\ket\psi = \int A\ket{r}\,\braket{r}{\psi}\, dr
              = \int r\ket{r}\,\braket{r}{\psi}\, dr $,
and the expectation of $A$ in state \ket\psi\ is
\begin{equation}\label{expectation}
\bra\psi A\ket\psi = \int r \vp{\braket{r}{\psi}}^2\, dr.
\end{equation}

In the case $A = P$, the density of probability amplitude is
\[
 \braket{r}{\psi}
 = \frac1{\sqrt{2\pi}} \braket{e^{irx}}{\psi(x)}
 = \frac1{\sqrt{2\pi}} \int e^{-irx}\psi(x)\, dx
 = \widehat\psi(r),
\]
the probability density is
\begin{equation}\label{pdensity}
D_P(r) = \vp{\widehat\psi(r)}^2,
\end{equation}
and $\P_P$ is the corresponding probability distribution.

\Qn Is there mathematical justification of Dirac's heuristic generalization?

\Ar The theory of rigged Hilbert spaces, see \cite{Bohm} for example, mathematically justifies the use of generalized eigenvectors%
\footnote{An alternative justification is provided by the spectral theory of operators \cite{Hall}.}.
The operators $A$ are subject to some constraints which are satisfied by the operators $X$, $P$ and their linear combinations.
See \S3 in \cite{Madrid} in this connection.

\Qn It this ``rigged'' as in rigged elections? What witty guy came up with this term?

\Ar This is a translation of a Russian term%
\footnote{\begin{otherlanguage}{russian} оснащенный; \end{otherlanguage} see \cite{GV}.}
meaning \emph{equipped} or \emph{rigged} as in ``rigging a ship for sailing.''

\Qn Will you tell me more about rigged Hilbert spaces?

\Ar Rigged Hilbert spaces deserve a separate column article, but the basic idea is that a Hilbert space \H\ is augmented with two additional spaces to obtain a so-called \emph{Gelfand triple}.

In our case, the triple is
\[ \Phi \subset \ltr \subset \Phi^\x \]
where $\Phi$ comprises test functions and $\Phi^\x$ comprises distributions.
Recall that we touched upon distributions in\S\ref{sub:delta}.

\Qn What kind of distributions are the eigenkets $\ket{r} = e^{irx}$ of $P$?

\Ar An antilinear functional
\[ \braket{f}{r} = \frac1{\sqrt{2\pi}} \int f^*(x) e^{irx} dx \]
on test functions $f$ \cite{Madrid2}.

\Qn I am confused. Earlier, in \S\ref{sub:delta}, you said that distributions are linear, not antilinear, functionals.

\Ar
Yes, the elements of $\Phi^\x$ are antilinear functionals serving as generalized kets while distributions are linear functionals serving as generalized bras.
In our case, it is safe to ignore the distinction and call both of them distributions.

\Qn Explain.

\Ar
To see what goes on, consider the finite dimensional case.
View elements of $\Co^n$ as column vectors, so the inner product \braket\phi\psi\ is given by the matrix product $\phi\dg\cdot\psi$.
Thus $\psi$ acts antilinearly on $\phi$ whereas $\phi$ acts linearly on $\psi$, but both are column vectors in the same space $\Co^n$. The same entities serve as linear and antilinear functionals.

From this point of view, an eigenket of an operator $A:\Co^n\to\Co^n$ for the eigenvalue $\lambda$ is a column vector $\psi$ such that $A\cdot\psi = \lambda \psi$.
An eigenbra is a column vector $\phi$ such that $\phi\dg \cdot A = \lambda \phi\dg$; equivalently, $A\dg\cdot\phi = \lambda^* \phi$.
When $A$ is self-adjoint and its eigenvalues $\lambda$ are therefore real, the eigenbras and eigenkets coincide.

The situation is similar in infinite dimensions, and the operators whose eigenkets and eigenbras we use are (essentially) self-adjoint.
Thus we can safely use the same entities as eigenkets and eigenbras.

\section{Linear combinations of position and momentum operators}

In this section, we apply Dirac's machinery to  linear combinations $aX+bP$ of the operators $X$ and $P$ where $a,b$ are real numbers not both zero.

One can argue that, from physical considerations, the spectrum of every $aX+bP$ is \R. This is supported by theory.
Every operator $aX + bP$ is self-adjoint%
\footnote{More exactly, $aX+bP$ is essentially self-adjoint \cite[Proposition~9.40]{Hall}, and every essentially self-adjoint operator has a unique self-adjoint extension \cite[Proposition~9.11]{Hall}.}.
The spectrum of every self-adjoint operator consists of reals \cite[Theorem~9.17]{Hall}.
Hence, for every $aX+bP$, the spectrum $\sigma(aX+bP) \subseteq \R$.
Furthermore, for every $aX+bP$, we will provide an eigenket \ket{r} of $aX+bP$ for every real $r$, so that $\sigma(aX+bP) = \R$.

\begin{lemma}\label{lem:constfactor2}
Let $Z = aX + bP$ where $a,b$ are real numbers not both zero, let $0\ne c\in\R$, and let $(u,v)$ be an interval in \R. Then, in every state \ket\psi,
\[ \Prob[Z\in (u,v)] = \Prob[cZ\in\set{cr: u<r<v}] =
\begin{cases}
  \Prob[cZ\in (cu,cv)] &\text{if }c>0,\\
  \Prob[cZ\in (cv,cu)] &\text{if }c<0.
\end{cases}
\]
\end{lemma}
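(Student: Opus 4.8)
The plan is to reduce the three claimed equalities to a single density identity together with a change of variables. First I would note that, since $c\ne0$ and $a,b$ are not both zero, the operator $cZ = (ca)X+(cb)P$ is again a nonzero linear combination of $X$ and $P$. Hence Dirac's machinery of \S\ref{sec:dirac} applies to $cZ$ exactly as it does to $Z$: the operator $cZ$ is cyclic and (essentially) self-adjoint with spectrum \R, it carries $\d$-normalized eigenkets, and in the state \ket\psi\ it has a probability density $D_{cZ}$ in the sense of \eqref{density}. The goal then becomes comparing $D_Z$ and $D_{cZ}$.

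The heart of the argument is the relation between the two families of eigenkets. Writing $\ket{r}_Z$ and $\ket{s}_{cZ}$ for the respective $\d$-normalized eigenkets, the eigenvalue equation $Z\ket{r}_Z = r\ket{r}_Z$ immediately gives $cZ\ket{r}_Z = cr\ket{r}_Z$, so $\ket{r}_Z$ is an eigenket of $cZ$ for the eigenvalue $cr$. By cyclicity the generalized eigenspace for a given eigenvalue is one-dimensional, so $\ket{r}_Z = \beta(r)\ket{cr}_{cZ}$ for some scalar $\beta(r)$. Imposing $\d$-normality, $\braket{r'}{r}_Z = \d(r-r')$, and using the scaling rule $\d\pa{c(r-r')} = \frac1{|c|}\d(r-r')$ from \eqref{da}, a short computation forces $\vp{\beta(r)}^2 = |c|$. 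Taking the squared modulus of $\braket{r}{\psi}_Z = \beta^*(r)\braket{cr}{\psi}_{cZ}$ and recalling \eqref{density}, this yields the density identity $D_Z(r) = |c|\,D_{cZ}(cr)$.

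It then remains to integrate. By \eqref{density} we have $\Prob[Z\in(u,v)] = \int_u^v D_Z(r)\,dr = |c|\int_u^v D_{cZ}(cr)\,dr$, and the substitution $s=cr$ converts the right-hand side into $\operatorname{sign}(c)\int_{cu}^{cv} D_{cZ}(s)\,ds$. For $c>0$ the limits are already ordered and this equals $\Prob[cZ\in(cu,cv)]$; for $c<0$ the negative sign reverses the orientation and it equals $\Prob[cZ\in(cv,cu)]$. In both cases the common value is $\Prob\big[cZ\in\set{cr:u<r<v}\big]$, because the set $\set{cr:u<r<v}$ is the interval $(cu,cv)$ when $c>0$ and $(cv,cu)$ when $c<0$. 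This establishes all three equalities simultaneously; equivalently, it records that the distribution $\P_{cZ}$ is the pushforward of $\P_Z$ under multiplication by $c$.

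I expect the eigenket step to be the main obstacle. One must justify that $\ket{r}_Z$ and $\ket{cr}_{cZ}$ are genuinely proportional — this is where cyclicity (simplicity of the spectrum) is used — and then pin down the normalization factor $\sqrt{|c|}$ correctly through the $\d$-function scaling. The phase of $\beta(r)$ is immaterial because only $\vp{\braket{r}{\psi}_Z}^2$ enters, but its modulus is precisely what produces the Jacobian factor $|c|$ that drives the whole computation; getting this factor right, rather than its reciprocal, is the one place where care is needed. The concluding change of variables is routine once the orientation bookkeeping for $c<0$ is handled.
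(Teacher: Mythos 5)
Your proposal is correct and follows essentially the same route as the paper: both arguments rest on the observation that the $\d$-normalized eigenkets of $cZ$ are the eigenkets of $Z$ rescaled by $\frac1{\sqrt{|c|}}$ (the factor forced by the scaling rule \eqref{da}), which gives the density identity $D_Z(r)=|c|\,D_{cZ}(cr)$, followed by the substitution $s=cr$ in the probability integral. The only difference is presentational --- the paper exhibits the rescaled family $\frac1{\sqrt{c}}f(\frac rc,x)$ directly and verifies it is a $\d$-normal eigenket system for $cZ$, whereas you posit an independent eigenket family for $cZ$ and invoke cyclicity to get proportionality --- and the paper writes out only the case $c>0$, while you handle both signs at once via $|c|$.
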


\Qn This seems obvious. The two events are the same and so have the same probability.

\Ar $Z$ is an observable, and the probabilities are determined by the
rules of quantum mechanics.

\Qn But we can view $Z$ also as a random variable by repeatedly measuring it in a given state.
It is gratifying that both views give the same result, isn't it?

\Ar Agreed.

\begin{proof}
We consider the case $c>0$; the case $c<0$ is similar.

As we saw in the previous section, there is a \d-normal system \iset{f(r,x): r\in\R} where $f(r,x)$ is an eigenket of $Z$ for spectrum value $r$.
Then \iset{\frac1{\sqrt{c}} f(\frac rc,x): r\in\R} is a \d-normal system for $cZ$.
Indeed,
\[
(cZ)\left(\frac1{\sqrt{c}} f\big(\frac rc,x\big)\right)
  = \sqrt{c}\, Z f\big(\frac rc,x\big)
  = \sqrt{c}\, \frac r{c} f\big(\frac rc,x\big)
  =  r\cdot\frac1{\sqrt{c}} f\big(\frac rc,x\big),
\]
and, using the \d-normality of \iset{f(r,x): r\in\R} and using \eqref{da}, we have
\[
\Bigbraket{\frac1{\sqrt{c}} f\big(\frac sc,x\big)}
         {\frac1{\sqrt{c}} f\big(\frac rc,x\big)}
= \frac1c\d\Big(\frac sc - \frac rc\Big) = \d(r-s).
\]

By \eqref{density}, in a state \ket\psi, the probability density functions for $Z$ and $cZ$ are
\[
 \vp{\braket{f(r,x)}{\psi(x)}}^2 \quad\text{and}\quad
 \vp{\braket{f(r/c,x)}{\psi(x)}}^2
\]
Using the substitution $r=cs$, we have
\begin{align*}
\Prob[cZ \in (cu,cv)]
&= \int_{cu}^{cv} dr \left|\int_\R dx\, \frac1{\sqrt{c}} f^*(\frac rc,x) \psi(x)\right|^2\\
&= \int_u^v ds \vp{\int_\R dx\, f^*(s,x)\psi(x)}^2
= \Prob[Z \in (u,v)]. \qedhere
\end{align*}
\end{proof}

For every real $r$, the eigenket \ket{r} of $X$ for spectrum value $r$ is the delta function $\d(x-r)$, which acts on test functions according to
\[ f(x) \mapsto \braket{f}{r} = \int f^*(x)\d(x-r)\,dx = f^*(r).\]
$X\ket{r}$ is the distribution that sends a test function $f(x)$ to
\begin{align*}
& \int f^*(x)x\d(x-r)\,dx
 = \int f^*(x+r)(x+r)\d(x)\,dx \\
&= r f^*(r) = r\braket{f}{r}.
\end{align*}
\noindent
Thus, $X\ket{r} = r\ket{r}$.
In particular, $\sigma(X) = \R$.
Furthermore, these eigenkets form a \d-normal system:
\[\braket{s}{r} = \int \d(x-s) \d(x-r)\,dx = \d(r-s).\]
The density of probability amplitude is
$
 \braket{r}{\psi} = \int \d(x-r)\psi\,dx = \psi(r),
$
the probability density function is
\begin{equation}\label{xdensity}
D_X(r) = \vp{\psi(r)}^2.
\end{equation}
and $\P_X$ is the corresponding probability distribution.

Next we consider a case $Z = -cX + P$. It is still a special case, but the general case $Z = aX + bP$ easily reduces to that special case; we return to this issue in the next section.

The mapping defined by $U\ket\psi = e^{icx^2/2}\ket\psi$ is a unitary operator on the test functions:
\[
\braket{U\psi}{U\phi} = \bra{\psi}U\dg U\ket{\phi} =
\braket{\psi}{\phi}.
\]
We have $UPU\mo = Z$ on test functions \cite[\S6]{G245}. Indeed,
\begin{align*}
\pa{UPU^{-1}}\psi &= e^{icx^2/2}\cdot
\pa{ -i \frac d{dx} \big(e^{-icx^2/2} \psi \big) }
 = -c x\psi - i \frac{d\psi}{dx} \\
&= -c X\psi + P\psi   = Z\psi.
\end{align*}
Accordingly, one may expect that $U$ transforms generalized eigenvectors of $P$ into those of $Z$.
This intuition happens to be correct.

For each real $r$, the distribution $e^{irx + icx^2/2}$ is a generalized eigenvector of $Z$ for $r$:
\begin{align*}
Z e^{irx + icx^2/2}
&= (-c X + P) e^{irx + icx^2/2}
 =-c x e^{irx+icx^2/2} - i \frac d{dx}e^{irx+icx^2/2} \\
&= e^{irx + icx^2/2} \big[ -c x - i(ir + icx) \big]
 = r e^{irx + icx^2/2}.
\end{align*}
It follows that the spectrum of $Z$ is \R.

For each real $r$, let \ket{r} be the eigenket $\frac1{\sqrt{2\pi}}e^{irx + icx^2/2}$ of $Z$ for generalized eigenvalue $r$. These eigenkets form a \d-normal system:
\[
\int \frac{e^{-isx-icx^2/2}}{\sqrt{2\pi}} \cdot \frac{e^{irx+icx^2/2}}{\sqrt{2\pi}}\,dx
= \frac1{2\pi}\int e^{i(r-s)x} dx = \d(r-s).
\]
Accordingly, the density of probability amplitude is
\[
 \braket{r}{\psi}
 = \frac1{\sqrt{2\pi}}\braket{e^{irx+icx^2/2}}{\psi(x)}
 = \frac1{\sqrt{2\pi}} \int e^{-irx+icx^2/2}\psi(x)\, dx,
\]
the probability density is
\begin{equation}\label{zdensity}
 D_Z(r) = \frac1{2\pi} \vp{\braket{e^{irx+icx^2/2}}{\psi(x)}}^2,
\end{equation}
and the corresponding probability distribution is $\P_Z$.

\section{Wigner's quasiprobability distribution}

In every \ltr\ state \ket\psi, Wigner's quasiprobability $\P_W$ is given by probability density function
\begin{equation}\label{denw}
w(x,p) = \frac1{2\pi} \int_\R \psi^*(x+\frac{\gamma\h}2)
\psi(x-\frac{\gamma\h}2) e^{i\gamma p}\,d\gamma.
\end{equation}
The integral converges in every \ltr\ state \ket\psi.
According to Wigner, the $x$-marginal and $p$-marginal of his distribution are the probability distributions for $X$ and $P$ respectively \cite{Wigner}.

\Qn You don't take advantage of using units where $\h=1$. I guess it doesn't hurt to keep \h\ in this case.

\Ar Exactly.

\Qn Is this obvious that the $x$-marginal and $p$-marginal of Wigner's distribution are the probability distributions for $X$ and $P$\,?

\Ar It is certainly easier to verify than the claim that an arbitrary $(ax+by)$-marginal of Wigner's distribution is the probability distribution for $aX+bY$.

\begin{lemma}\label{lem:w1}
In every state \ket\psi, $(x)_*\P_W = \P_X$.
\end{lemma}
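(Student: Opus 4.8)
The plan is to compute the density of the $x$-marginal explicitly and recognize it as $D_X$. By Lemma~\ref{lem:den} applied with $(a,b)=(1,0)$ (the case $b=0$, $a=1$), the marginal $(x)_*\P_W$ is given by the density function
\[
g(x) = \int_\R w(x,p)\,dp .
\]
So the whole matter reduces to evaluating this integral and checking that it equals $\vp{\psi(x)}^2$, which is exactly $D_X(x)$ from \eqref{xdensity}; that identification of densities then yields $(x)_*\P_W = \P_X$.

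First I would substitute the Wigner density \eqref{denw} and interchange the order of integration, carrying out the $p$-integration first:
\[
g(x) = \frac1{2\pi}\int_\R \psi^*\pa{x+\frac{\gamma\h}2}\,\psi\pa{x-\frac{\gamma\h}2}\pa{\int_\R e^{i\gamma p}\,dp}\,d\gamma .
\]
Next I would invoke the distributional identity \eqref{e-d}, which gives $\int_\R e^{i\gamma p}\,dp = 2\pi\delta(\gamma)$. The factor $2\pi$ cancels the prefactor $\tfrac1{2\pi}$, and the delta collapses the $\gamma$-integral to its value at $\gamma=0$:
\[
g(x) = \int_\R \psi^*\pa{x+\frac{\gamma\h}2}\,\psi\pa{x-\frac{\gamma\h}2}\,\delta(\gamma)\,d\gamma = \psi^*(x)\psi(x) = \vp{\psi(x)}^2 .
\]
Comparing with \eqref{xdensity}, this is precisely $D_X(x)$, so $g$ is the density of $\P_X$ and hence $(x)_*\P_W = \P_X$.

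The delicate point — and the one I expect to be the main obstacle — is justifying these formal manipulations for an \emph{arbitrary} \ltr\ state, not merely a test function. The inner integral $\int_\R e^{i\gamma p}\,dp$ diverges in the ordinary sense, and swapping the $p$- and $\gamma$-integrations is not licensed by Fubini for a general $\psi\in\ltr$. The clean way to make the argument rigorous is to read the whole computation distributionally, as set up in \S\ref{sub:delta}: pair $g$ against a test function, so that the $p$-integration produces the functional $\gamma\mapsto 2\pi\delta(\gamma)$ acting on the smooth, rapidly decaying function $\gamma\mapsto\int\phi(x)\,\psi^*(x+\tfrac{\gamma\h}2)\,\psi(x-\tfrac{\gamma\h}2)\,dx$, which is then evaluated at $\gamma=0$. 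Alternatively, one approximates $\psi$ by test functions, for which every step above is literally valid, and passes to the limit using the density of test functions in \ltr\ together with the continuity of the maps involved.
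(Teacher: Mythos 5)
Your proposal is correct and follows essentially the same route as the paper's proof: apply Lemma~\ref{lem:den}, interchange the order of integration, use the distributional identity \eqref{e-d} to produce $2\pi\delta(\gamma)$, and collapse the $\gamma$-integral to $\vp{\psi(x)}^2 = D_X(x)$. Your closing remarks on justifying the formal manipulations for a general \ltr\ state go beyond what the paper writes down, but the core argument is identical.
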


\begin{proof}
It suffices to prove that, in every state \ket\psi, the $x$-marginal $x_*w$ of Wigner's density is the density function $\vp{\psi}^2$. Since two density function coincide if they are proportional, we may neglect constant factors.

By Lemma~\ref{lem:den}, up to constant factors, the $x_*w$ density function is
\begin{align*}
& x_*w = \int w(x,p)dp = &&\text{by Lemma~\ref{lem:den}} \\
& \int\left[\int \psi^*(x+\frac{\gamma\h}2)\,
 \psi(x-\frac{\gamma\h}2)\, e^{i\gamma p}\,d\gamma\right]dp =\\
& \int \left[ \int e^{i\gamma p} dp \right]
 \psi^*(x+\frac{\gamma\h}2)\,  \psi(x-\frac{\gamma\h}2)\,d\gamma =&&\text{by \eqref{e-d} } \\
& \int \delta(\gamma)
 \psi^*(x+\frac{\gamma\h}2)\,  \psi(x-\frac{\gamma\h}2)\,d\gamma = &&\text{by \eqref{d}} \\
& \psi^*(x)\,\psi(x) = \vp{\psi}^2 \qedhere
\end{align*}
\end{proof}

\begin{lemma}\label{lem:w2}
$\displaystyle
w(x,p)=\int\widehat\psi(p+\frac\gamma2)^*\,
\widehat\psi(p-\frac\gamma2)\, e^{-i\gamma x}\,d\gamma$\\
up to a constant factor.
\end{lemma}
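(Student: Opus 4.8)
The plan is to pass from the position wave function $\psi$ to its Fourier transform $\widehat\psi$ directly inside the definition \eqref{denw}, so that the momentum-representation form of Wigner's density falls out of a single computation. Since the statement only claims equality up to a constant factor, I would adopt the convention $\h=1$ that the paper uses elsewhere; for general $\h$ the same computation merely rescales the inner argument $p\mapsto p/\h$ and multiplies by a constant, so the identity as written is really the $\h=1$ case. The only ingredient needed is the inverse Fourier representation $\psi(u)=\frac1{\sqrt{2\pi}}\int\widehat\psi(\xi)e^{i\xi u}\,d\xi$, applied to the two shifted factors $\psi^*(x+\frac\gamma2)$ and $\psi(x-\frac\gamma2)$ occurring in \eqref{denw}.

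First I would substitute these two representations into \eqref{denw}, using a dummy variable $\eta$ for the $\psi^*$ factor and $\xi$ for the $\psi$ factor. This rewrites $w(x,p)$ as a triple integral over $\gamma,\eta,\xi$ whose integrand is $\widehat\psi^*(\eta)\,\widehat\psi(\xi)$ times a pure exponential. Collecting the $\gamma$-dependent terms, the exponent is linear in $\gamma$, namely $i\gamma\bigl(p-\frac{\eta+\xi}2\bigr)$, so carrying out the $\gamma$-integral first and invoking \eqref{e-d} produces $2\pi\,\delta\bigl(p-\frac{\eta+\xi}2\bigr)$. The surviving $x$-dependence is the single exponential $e^{i(\xi-\eta)x}$.

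Next I would change variables from $(\eta,\xi)$ to $(s,\gamma)$ with $s=\frac{\eta+\xi}2$ and $\gamma=\eta-\xi$, a linear map whose Jacobian has absolute value $1$, so $d\eta\,d\xi=ds\,d\gamma$. In these coordinates $\eta=s+\frac\gamma2$ and $\xi=s-\frac\gamma2$, the delta factor becomes $\delta(p-s)$, and the $x$-exponential becomes $e^{-i\gamma x}$. Integrating out $s$ against $\delta(p-s)$ sets $s=p$, and what remains is exactly $\int\widehat\psi^*(p+\frac\gamma2)\,\widehat\psi(p-\frac\gamma2)\,e^{-i\gamma x}\,d\gamma$, i.e.\ the claimed formula, with the accumulated prefactor $\frac1{2\pi}$ (one $\frac1{2\pi}$ from \eqref{denw}, one from the two Fourier normalizations, and a $2\pi$ from the delta) absorbed into the permitted constant.

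The hard part will be the formal justification rather than the algebra: the step that interchanges the order of integration and reads the inner $\gamma$-integral as $2\pi\delta$ is a divergent-integral manipulation legitimate only in the distributional sense of \S\ref{sub:delta}. I would either interpret the whole computation as an identity of distributions paired against a test function in the variables $(x,p)$, so that the interchange and \eqref{e-d} are justified exactly as in \S\ref{sub:delta}, or else use the convergence of \eqref{denw} asserted immediately after its statement to treat the outer integrations as genuine Lebesgue integrals while keeping only the $\gamma$-integral distributional. The remaining bookkeeping of the constant factor and of $\h$ is rendered harmless by the ``up to a constant factor'' in the statement.
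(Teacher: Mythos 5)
Your proposal is correct and follows essentially the same route as the paper: substitute the inverse Fourier representation for both factors in \eqref{denw}, read the $\gamma$-integral as a delta function via \eqref{e-d}, and then eliminate one frequency variable with the delta and rename the other as the new $\gamma$ (the paper integrates out $\eta$ first and then changes variables, while you perform the equivalent two-variable linear change first — a difference of bookkeeping only). The signs, the Jacobian, and the accumulated constant all check out.
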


\begin{proof}
Ignoring constant factors and using the formula
  \[
\psi(x)=\int\widehat\psi(\xi)\,e^{i\xi x}\,d\xi
\]
 for the inverse Fourier transform, we get
\begin{align*}
w(x,p)
&= \int\psi(x+\frac\gamma2)^*\,\psi(x-\frac\gamma2)\,
 e^{i\gamma p}\,d\gamma \\
&=\iiint\widehat\psi(\xi)^*\,
  e^{-i\xi(x+\frac\gamma2)}\,\widehat\psi(\eta)\,
  e^{i\eta(x-\frac\gamma2)}\,e^{i\gamma p}\,d\xi\,d\eta\,d\gamma \\
&= \iint\left[e^{-i\gamma(\frac\xi2+\frac\eta2-p)}d\gamma \right]
\widehat\psi(\xi)^*\,\widehat\psi(\eta)\,
e^{-ix(\xi-\eta)}\, d\xi\,d\eta \\
&= \iint\widehat\psi(\xi)^*\,\widehat\psi(\eta)\,
 e^{-ix(\xi-\eta)}\,
\delta(\frac\xi2 +\frac\eta2-p)\,d\xi\,d\eta.
\end{align*}
The $\delta$ function makes it easy to perform the integration with
respect to $\eta$.
Just substitute $2p-\xi$ for $\eta$ in the remaining factors.
\[
w(x,p)=\int\widehat\psi(\xi)^*\,\widehat\psi(2p-\xi)\,
 e^{-ix(2\xi-2p)}\,d\xi.
\]
Change variables in the integral to $\gamma=2\xi-2p$, so $\xi$ becomes $p+\frac\gamma2$.
\[w(x,p)=\int\widehat\psi(p+\frac\gamma2)^*\,
\widehat\psi(p-\frac\gamma2)\,e^{-i\gamma x}\,d\gamma. \qedhere\]
\end{proof}

\begin{corollary}
The $(p)$-marginal of Wigner's distribution is the probability distribution $\P_P$ for $P$.
\end{corollary}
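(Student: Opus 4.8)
The plan is to exploit the symmetry between Wigner's density in position representation, \eqref{denw}, and its momentum representation established in Lemma~\ref{lem:w2}: once $w(x,p)$ is written in terms of $\widehat\psi$, computing the $p$-marginal mirrors, almost verbatim, the computation of the $x$-marginal carried out in Lemma~\ref{lem:w1}.

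First I would reduce the claim to densities. Since $\P_W$ is given by the density $w(x,p)$ and two proportional densities determine the same normalized probability distribution, it suffices to show that the $p$-marginal density $p_*w$ is proportional to $\vp{\widehat\psi(p)}^2$, which by \eqref{pdensity} is the density $D_P$ of $\P_P$. By Lemma~\ref{lem:den} with $a=0$ and $b=1$, this $p$-marginal density is
\[ p_*w = \int w(x,p)\,dx. \]

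Next I would insert the expression for $w(x,p)$ furnished (up to a constant factor) by Lemma~\ref{lem:w2} and interchange the order of integration, so that the inner integral $\int e^{-i\gamma x}\,dx$ can be evaluated. By \eqref{e-d} and the evenness of $\delta$, this inner integral equals $2\pi\,\delta(\gamma)$, and the sifting property \eqref{d} of $\delta$ at $\gamma=0$ then leaves
\[ p_*w = 2\pi\,\widehat\psi(p)^*\,\widehat\psi(p) = 2\pi\,\vp{\widehat\psi(p)}^2 \]
up to a constant factor. This is proportional to $D_P(p)$, whence $(p)_*\P_W = \P_P$.

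The main obstacle is not in this final computation: once Lemma~\ref{lem:w2} is available, the argument is the exact momentum-space analogue of the proof of Lemma~\ref{lem:w1}. The only points needing care are the distributional manipulations --- the interchange of integrations and the emergence of $\delta(\gamma)$ through \eqref{e-d} --- which we treat on the same heuristic footing as in Lemma~\ref{lem:w1}, together with the harmless bookkeeping of constant factors, since we only need $p_*w$ up to proportionality.
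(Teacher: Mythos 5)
Your proposal is correct and follows exactly the route the paper intends: the paper's own proof of this corollary is simply ``similar to that of Lemma~\ref{lem:w1}, except the formula of Lemma~\ref{lem:w2} is used,'' which is precisely the computation you carry out. The reduction via Lemma~\ref{lem:den} (with $a=0$, $b=1$), the appeal to \eqref{e-d} to produce $\delta(\gamma)$, and the bookkeeping of constant factors all match the paper's treatment.
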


The proof is similar to that of Lemma~\ref{lem:w1}, except the formula of Lemma~\ref{lem:w2} is used.

The generalization to arbitrary $(ax+bp)$-marginals will be proved in \S\ref{sec:char}.
A key role in that proof is played by the following lemma.

\begin{lemma}[Lemma~6.6 in \cite{G245}]\label{lem:key}
For all test functions $\psi$ and real numbers $\alpha,\beta$, not both zero,
\[
 \bra\psi e^{-i (\alpha X + \beta P)}\ket\psi =
 e^{i\alpha\beta\hbar/2}
 \int\psi^*(y)e^{-i\alpha y}\psi(y-\beta\hbar)\,dy.
\]
\end{lemma}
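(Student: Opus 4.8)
The plan is to compute the operator $e^{-i(\alpha X + \beta P)}$ explicitly on test functions and then pair the result with $\psi$. The only subtlety is that one cannot naively split the exponential as $e^{-i\alpha X}e^{-i\beta P}$, because $X$ and $P$ do not commute. A one-line computation gives the canonical commutation relation $[X,P] = i\hbar I$. The decisive point is that this commutator is a scalar multiple of the identity, hence central, so the Baker--Campbell--Hausdorff series terminates after a single correction term. Writing $A = -i\alpha X$ and $B = -i\beta P$, so that $[A,B] = (-i\alpha)(-i\beta)[X,P] = -i\alpha\beta\hbar\,I$, the central form of the identity, $e^{A+B} = e^{A}e^{B}e^{-\frac12[A,B]}$, gives
\[
 e^{-i(\alpha X + \beta P)} = e^{i\alpha\beta\hbar/2}\,e^{-i\alpha X}\,e^{-i\beta P}.
\]

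Next I would evaluate the two operator factors using the exponential-operator formulas from \S\ref{sub:delta}. Since $P = -i\hbar\,\frac{d}{dx}$, the rightmost exponent is $-i\beta P = -\beta\hbar\,\frac{d}{dx}$, so $e^{-i\beta P}$ is the shift $\psi(x)\mapsto\psi(x-\beta\hbar)$; the middle factor $e^{-i\alpha X}$ is multiplication by $e^{-i\alpha x}$. (Both operations send test functions to test functions, so every integral below converges.) Assembling the three factors gives the pointwise identity $\bigl(e^{-i(\alpha X + \beta P)}\psi\bigr)(x) = e^{i\alpha\beta\hbar/2}\,e^{-i\alpha x}\,\psi(x-\beta\hbar)$, and pairing with $\psi$ yields
\[
 \bra\psi e^{-i(\alpha X + \beta P)}\ket\psi
   = e^{i\alpha\beta\hbar/2}\int \psi^*(x)\,e^{-i\alpha x}\,\psi(x-\beta\hbar)\,dx,
\]
which is the asserted formula after renaming $x$ as $y$.

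The step I expect to be the main obstacle is justifying the factorization rigorously, since $X$ and $P$ are unbounded and the convergence of the operator exponential needs the same care Quisani worried about earlier. Rather than invoke Baker--Campbell--Hausdorff as a black box, I would, if pressed, verify the factorization directly. Put $\psi_t = e^{-it(\alpha X + \beta P)}\psi$; this family solves the first-order transport equation $\partial_t\psi_t = -i\alpha x\,\psi_t - \beta\hbar\,\partial_x\psi_t$ with $\psi_0 = \psi$. The method of characteristics then shows that $\psi_t(x) = e^{i\alpha\beta\hbar t^2/2}\,e^{-i\alpha x t}\,\psi(x-\beta\hbar t)$ solves the same initial-value problem, and setting $t=1$ recovers the pointwise identity above without any appeal to the convergence of the operator-exponential series.
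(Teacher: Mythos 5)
Your proof is correct, and it is genuinely different in character from what the paper does: the paper does not prove this lemma at all, but imports it as Lemma~6.6 of \cite{G245} (proved there by a direct computation for smooth compactly supported $\psi$) and merely remarks that the argument extends to test functions. Your argument is self-contained. The algebra is right: $[X,P]=i\hbar I$, so $[-i\alpha X,-i\beta P]=-i\alpha\beta\hbar I$ is central, the Baker--Campbell--Hausdorff series collapses to $e^{A+B}=e^Ae^Be^{-\frac12[A,B]}$, and the two factors act as multiplication by $e^{-i\alpha x}$ and the shift by $\beta\hbar$, which gives exactly the claimed kernel. What your approach buys is transparency about \emph{why} the phase $e^{i\alpha\beta\hbar/2}$ appears (it is precisely the Weyl-ordering correction), whereas the cited computation buys independence from any operator-theoretic machinery.

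Your instinct that the factorization is the step needing care is also right, and the transport-equation fallback is the correct fix; let me just make the one remaining gap explicit. Verifying that $\psi_t(x)=e^{i\alpha\beta\hbar t^2/2}e^{-i\alpha xt}\psi(x-\beta\hbar t)$ satisfies the PDE shows you have \emph{a} solution, but to conclude it equals $e^{-it(\alpha X+\beta P)}\psi$ you need uniqueness, which for unbounded generators is not automatic from the formal series. The clean way to close this is to check that your explicit family $V_t$ is a strongly continuous one-parameter unitary group (the phases compose correctly: $V_sV_t=V_{s+t}$), that its generator agrees with $-i(\alpha X+\beta P)$ on test functions, and then invoke essential self-adjointness of $\alpha X+\beta P$ on that domain together with Stone's theorem to identify $V_t$ with $e^{-it(\alpha X+\beta P)}$. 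Both ingredients are already cited in the paper (Propositions~9.40 and 9.11 of \cite{Hall}), so this costs you nothing new. With that one sentence added, your proof is complete and arguably more informative than the reference the paper leans on.
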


Actually, Lemma~6.6 in \cite{G245} speaks about states \ket\psi\ that are ``nice'' in the sense that the function $\psi$ is smooth and compactly supported.
But the lemma and its proof obviously remain valid for test functions $\psi$.

\section{Characterization theorem}
\label{sec:char}

\begin{theorem}\label{thm:main}
In every state \ket\psi\ in \ltr, Wigner's quasidistribution $\P_W$ is the unique quasidistribution on $\R^2$ such that, for all real numbers $a,b$ not both zero, the marginal $(ax+bp)_*\P_W$ is correct in the sense that it coincides with the probability distribution $\P_Z$ of the observable $Z = aX + bP$.
\end{theorem}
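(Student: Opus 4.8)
The plan is to separate Theorem~\ref{thm:main} into two parts: \emph{correctness}, that $(ax+bp)_*\P_W=\P_Z$ for every pair $(a,b)$ not both zero, and \emph{uniqueness}, that no other quasidistribution has this property. Uniqueness is immediate from the measure theory already developed: if $\mathcal{Q}$ is a quasidistribution on $\R^2$ all of whose $(ax+bp)$-marginals equal those of $\P_W$, then $(ax+bp)_*\mathcal{Q}=(ax+bp)_*\P_W$ for all $a,b$ not both zero, and Theorem~\ref{thm:unique} yields $\mathcal{Q}=\P_W$. The only thing to check here is that $\mathcal{Q}$ and $\P_W$ are genuine finite Borel measures, as Theorem~\ref{thm:unique} requires; for $\P_W$ this is part of its being the quasidistribution with density \eqref{denw}.

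For correctness I would argue via Fourier transforms. By Lemma~\ref{lem:j2m}, the density of $(ax+bp)_*\P_W$ is determined by $\zeta\mapsto\sqrt{2\pi}\,\widehat w(a\zeta,b\zeta)$, where $\widehat w$ is the two-dimensional Fourier transform of the Wigner density $w$. So it is enough to show that $\sqrt{2\pi}\,\widehat w(a\zeta,b\zeta)$ equals $\widehat{D_Z}(\zeta)$, the Fourier transform of the density $D_Z$ of $\P_Z$, since the Fourier transform determines the density. Writing $\widehat w(a\zeta,b\zeta)=\frac1{2\pi}\iint e^{-i(a\zeta x+b\zeta p)}w(x,p)\,dx\,dp$, substituting \eqref{denw}, integrating in $p$ first so that \eqref{e-d} collapses the $\gamma$-integral at $\gamma=\zeta b$, and then changing variables $y=x+\tfrac{\zeta b\h}2$, I expect this double integral to reduce, up to the factor $\frac1{2\pi}$, to $e^{i\zeta^2 ab\h/2}\int\psi^*(y)\,e^{-i\zeta a y}\,\psi(y-\zeta b\h)\,dy$.

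This last integral is precisely the right-hand side of Lemma~\ref{lem:key} with $\alpha=\zeta a$ and $\beta=\zeta b$, hence it equals $\bra\psi e^{-i(\zeta a X+\zeta b P)}\ket\psi=\bra\psi e^{-i\zeta Z}\ket\psi$. On the other hand, expanding $e^{-i\zeta Z}\ket\psi$ over the $\d$-normal eigenkets of $Z$ exactly as in the derivation of \eqref{expectation}, but with the bounded function $e^{-i\zeta r}$ replacing $r$, gives $\bra\psi e^{-i\zeta Z}\ket\psi=\int e^{-i\zeta r}\vp{\braket r\psi}^2\,dr=\int e^{-i\zeta r}D_Z(r)\,dr$, which is $\widehat{D_Z}(\zeta)$ up to normalization. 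Tracking the constants then yields $\sqrt{2\pi}\,\widehat w(a\zeta,b\zeta)=\widehat{D_Z}(\zeta)$, so the two densities agree and $(ax+bp)_*\P_W=\P_Z$, at least for test-function states, where Lemma~\ref{lem:key} is available.

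The main obstacle is the passage from test functions to arbitrary $\psi\in\ltr$, together with the measure-theoretic points that $\P_W$ be a finite signed measure and that the formal use of \eqref{e-d} be legitimate. I would handle the extension by continuity: the sesquilinear form $\int\psi^*(y)\,e^{-i\zeta a y}\,\psi(y-\zeta b\h)\,dy$ is bounded by $\|\psi\|^2$ by Cauchy--Schwarz, and $\bra\psi e^{-i\zeta Z}\ket\psi$ is continuous in $\psi$ because $e^{-i\zeta Z}$ is unitary; since both agree on the dense set of test functions, they agree throughout $\ltr$. Combined with the spectral identity $\bra\psi e^{-i\zeta Z}\ket\psi=\int e^{-i\zeta r}\,d\P_Z(r)$, valid in every state, this upgrades the Fourier-transform identity to all $\psi$ and so gives correctness in full generality; uniqueness from Theorem~\ref{thm:unique} then finishes the proof.
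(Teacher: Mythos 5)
Your proposal follows essentially the same route as the paper: uniqueness from Theorem~\ref{thm:unique}, and correctness by computing $\widehat w(a\zeta,b\zeta)=\bra\psi e^{-i(a\zeta X+b\zeta P)}\ket\psi$ via Lemma~\ref{lem:key}, matching this against $\widehat{D_Z}$ through the Dirac eigenket expansion of $e^{-i\zeta Z}$ and Lemma~\ref{lem:j2m}, and passing from test functions to all of \ltr\ by density and continuity. The one point you gloss over is that the \d-normal eigenket systems defining $D_Z$ are constructed only for $Z=X$ and $Z=-cX+P$, so the general pair $(a,b)$ requires the rescaling reduction of Lemmas~\ref{lem:constfactor1} and~\ref{lem:constfactor2} (or an explicit eigenket construction for every $aX+bP$), which the paper supplies and your argument tacitly assumes.
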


\begin{proof}
The uniqueness follows from Theorem~\ref{thm:unique}.

We need to prove the equality $(ax+bp)_*\P_W = \P_Z$ for every pair of reals $a,b$ not both zero and in every state \ket\psi.
By virtue of Lemmas~\ref{lem:constfactor1} and \ref{lem:constfactor2}, we restrict attention to two cases:
\begin{enumerate}
\item $a=1$ and $b=0$, so that $Z = X$,
\item $b=1$, so that $Z = aX + P$.
\end{enumerate}
Case~(1) is taken care of by Lemma~\ref{lem:w1}.
In the rest of the proof we consider case~(2).
Even though $b=1$, we sometimes write $aX+bP$ and $ax+bp$ anyway.

Fix an arbitrary real number $a$.
We prove that, in every state,
\begin{equation}\label{pcorrect}
(ax+bp)_*\P_W = (ax+p)_*\P_W = \P_Z.
\end{equation}

In every state \ket\psi, the quasidistribution $\P_W$ is given by the probability density function $w(x,p)$ specified by formula \eqref{denw}. By Lemma~\ref{lem:den}, the marginal quasidistribution $(ax+p)_*\P_W$ is given by quasiprobability density function
\begin{align*}
g(r) &= \int w(x,r-ax)\,dx \\
&=  \frac1{2\pi} \iint  \psi^*(x+\frac{\gamma\h}2)\,
\psi(x-\frac{\gamma\h}2)\, e^{i\gamma(r-ax)}\,d\gamma\,dx.
\end{align*}
The probability distribution $\P_Z$ is given by the probability density function $D_Z$ of $Z$ specified in \eqref{zdensity}.
To prove \eqref{pcorrect}, it suffices to prove
\begin{equation}\label{dcorrect}
 g(r) = D_Z(r).
\end{equation}

Both sides of \eqref{dcorrect} are continuous as functions of the state in the \ltr\ metric.
By continuity, it suffices to prove that the equality \eqref{dcorrect} holds in every ``nice'' state \ket\psi\ provided that the nice states are dense in \ltr.

Theorem~6.2 in \cite{G245} does just that. In that theorem, a state \ket\psi\ is nice if $\psi$ is smooth and compactly supported.
While Theorem~6.2 addresses both, the uniqueness and the correctness aspects,
the emphasis in its proof is on uniqueness, and the correctness proof may be a bit confusing.
We explain it here.
Notice that two density functions coincide if they are proportional.
Accordingly, we ignore constant factors in equations below.

Since (ignoring the factor $1/2\pi$)
\[
w(x,p) = \int \psi^*(x+\frac{\gamma\h}2)\, \psi(x-\frac{\gamma\h}2)\, e^{i\gamma p}\,d\gamma,
\]
its Fourier transform is
\[
  \widehat w(\alpha,\beta)=\iiint dx\,dp\,d\gamma\,
  \psi^*(x+\frac{\gamma\h}2)\,
\psi(x-\frac{\gamma\h}2)\,
 e^{i\gamma p}\, e^{-i\alpha x}\,e^{-i\beta p}.
\]
Here $p$ occurs only in two of the exponential factors, so the
integration over $p$ produces
\[
\int dp \,e^{i(\gamma-\beta)p}=\delta(\gamma-\beta).
\]
The delta function now makes the integral over $\gamma$ trivial; just substitute $\beta$ for $\gamma$ in the integrand. Thus,
\[
\widehat w(\alpha,\beta)=\int dx\, \psi^*(x+\frac{\beta\h}2)\,
\psi(x-\frac{\beta\h}2)\,e^{-i\alpha x}.
\]
Introducing a new integration variable $y=x+\frac{\beta\h}2$, we
get
\[
  \widehat w(\alpha,\beta)=\int dy\,\psi^*(y)\,\psi(y-\beta\h)\,e^{-i\alpha y}\,
  e^{i\alpha\beta\h/2}.
\]
By Lemma~\ref{lem:key},
\[
  \widehat w(\alpha,\beta)=\bra\psi e^{-i(\alpha X+\beta P)}\ket\psi.
\]
In particular, if $Z$ is defined as $aX+bP$ and if we substitute
$a \zeta$ and $b \zeta$ for $\alpha$ and $\beta$, we get
\[
\widehat w(a\zeta,b\zeta)=\bra\psi e^{-i\zeta Z}\ket\psi.
\]
By Lemma~\ref{lem:j2m}, $\bra\psi e^{-i\zeta Z}\ket\psi$ is the Fourier transform of the marginal $g(z) = (ax+bp)_*w(x,p)$.
But this same $\bra\psi e^{-i\zeta Z}\ket\psi$ is also, up to a constant factor, the Fourier transform of the density function $D_Z$ of $Z$ in the state $\psi$.
Indeed, by \eqref{exponent},
\[
e^{-i\zeta Z} = \sum_k \frac1{k!}(-i\zeta Z)^k,
\]
and we restrict attention to test functions $\psi$, so that convergence is no problem. We have
\[
\bra\psi e^{-i\zeta Z}\ket\psi
= \Big\langle\psi\,\Big|\,\sum_k\frac1{k!}(-i\zeta Z)^k\,\Big|\psi\Big\rangle
= \sum_k \frac{(-i\zeta)^k}{k!} \bra\psi Z^k \ket\psi
\]
Now we use Dirac's machinery. Let \ket{r} be the eigenket $\frac1{\sqrt{2\pi}}e^{irx + icx^2/2}$ of $Z$ for spectrum value $r$. We have $Z^2\ket r = Z(r\ket r) = r^2\ket r$ and similarly for other powers of $Z$. By the preceding computation, \eqref{expectation}, and \eqref{zdensity}, we have
\begin{align*}
\bra\psi e^{-i\zeta Z}\ket\psi
&= \sum_k \frac{(-i\zeta)^k}{k!} \int r^k |\braket r\psi|^2\,dr
= \int \sum_k \frac{(-i\zeta)^k}{k!} r^k |\braket r\psi|^2\,dr \\
&= \int e^{-i\zeta r} |\braket r\psi|^2\,dr
= \int e^{-i\zeta r} D_Z(r)\,dr,
\end{align*}
as required.
\end{proof}

\subsection*{Acknowledgment}
We thank Rafael de la Madrid for useful comments.

\end{document}